\newcommand{\ignore}[1]{}
\renewcommand{\subsubsection}{\@startsection{subsubsection}{3}{0pt}{-12pt}{-5pt}{\normalsize\bf}}
\newtheorem{claim}{Claim}[section]
\newtheorem{proposition}[claim]{Proposition}
\newtheorem{observation}[claim]{Observation}
\newtheorem{lemma}[claim]{Lemma}
\newtheorem{theorem}{Theorem}[section]
\newtheorem{definition}{Definition}
\newtheorem{corollary}[claim]{Corollary}
\newtheorem{fact}[claim]{Fact}
\newtheorem*{theorem*}{Theorem}
\newtheorem*{claim*}{Claim}
\newtheorem*{lemma*}{Lemma}
\newtheorem*{proposition*}{Proposition}
\renewcommand{\Pr}{{\mathrm{Pr}}}
\newcommand{\bits}{\{0,1\}}
\newcommand{\bitsn}{\bits^n}
\newcommand{\poly}{\mathrm{poly}}
\newcommand{\supp}{{\mathrm{supp}}}
   \newcommand{\rnote}[1]{{#1}}
\title{Noisy population recovery in polynomial time }
\date{\today}
\author{Anindya De\thanks{Some part of this work was done while the author was a postdoc at DIMACS, Rutgers.} \\
Northwestern University \\
Evanston, IL, USA \\
{\tt anindya@eecs.northwestern.edu}
\and
Michael Saks \thanks{Supported by NSF grant CCF-1218711 and by  Simons Foundation award 332622.} \\
Department of Mathematics \\
Rutgers University \\
Piscataway, NJ, USA \\
{\tt saks@math.rutgers.edu}
\and
Sijian Tang \thanks{Supported by NSF grant  CCF-1218711} \\
Department of Mathematics \\
Rutgers University \\
Piscataway, NJ, USA \\
{\tt st509@math.rutgers.edu}
}
\begin{document}

\bibliographystyle{alpha}
\maketitle

\begin{abstract}
In the noisy population recovery problem of Dvir et al.~\cite{dvir2012restriction}, the goal is to learn
an unknown  distribution $f$ on binary strings of length $n$ from noisy samples. For some parameter $\mu \in [0,1]$,
a noisy sample is generated by flipping each coordinate of a sample from $f$  independently  with
probability $(1-\mu)/2$.
We assume an upper bound $k$ on the size of the support of the distribution, and the
goal is to estimate the probability of any string to within some given error $\varepsilon$.  It is known
that the algorithmic complexity and sample complexity of this problem are polynomially related to each other.

We show that for $\mu >   0$, the sample complexity (and hence the algorithmic complexity)
is bounded by a polynomial in $k$, $n$ and  $1/\varepsilon$    improving upon the  previous best result of $\mathsf{poly}(k^{\log\log k},n,1/\varepsilon)$ due to Lovett and Zhang~\cite{lovett2015improved}.

Our proof combines ideas from \cite{lovett2015improved} with a \emph{noise attenuated} version of M\"{o}bius inversion. In turn, the latter crucially uses the construction of \emph{robust local inverse} due to Moitra and Saks~\cite{moitra2013polynomial}.

\end{abstract}

\section{Introduction}

\subsection{Background and Our Result}

The population recovery problem is a basic problem in noisy unsupervised learning which has received significant attention in the recent past~\cite{dvir2012restriction, wigderson2012population, moitra2013polynomial, lovett2015improved}. In this problem,
there is an unknown distribution $f$ over binary strings of length $n$, and an error parameter $0<\mu<1$.
Noisy samples from it are generated as:
\begin{itemize}
\item  Choose a string $x$ according to $f$.
\item  Flip each coordinate of $x$ independently with probability $\frac{1-\mu}{2}$.
\end{itemize}
Given access to these noisy samples, the task of the learner is to output a set of strings $S$ and for each string $x$ in $S$, an estimate $\tilde{f}(x)$ of $f(x)$, such that $|\tilde{f}(x)-f(x)|\leq \epsilon$. And for all $x\notin S$, $f(x)\leq \epsilon$. For $\mu=1$, the problem is trivial to solve, whereas for $\mu=0$, the distribution $f$ cannot be recovered with any number of samples. As $\mu$ becomes smaller, the learning problem becomes progressively harder.
There is an alternate (and easier) model called the lossy model where instead of flipping bits, each bit is replaced by a '?' independently with probability $1-\mu$ and presented to the learner. \\
This problem was introduced  by Dvir et al. \cite{dvir2012restriction} who related it to the problem of learning DNF from restrictions. For the lossy model,  Dvir et al. \cite{dvir2012restriction} gave a polynomial time algorithm for population recovery for any $\mu\gtrsim 0.365$. Their analysis was improved by Batman, et al. \cite{batman2013finding} who showed that the same algorithm works for any $\mu>1-1/\sqrt{2}\approx 0.293$.  Subsequently, Moitra and Saks \cite{moitra2013polynomial}  gave a polynomial time algorithm for population recovery in the lossy model for any $\mu>0$.


For the noisy sample problem, algorithms are known only when the support size of $f$ is bounded by a parameter $k$.
Wigderson and Yehudayoff \cite{wigderson2012population} developed a framework called ``partial identification" and used this to give an algorithm that runs in time  $\mathsf{poly}(k^{\log k},n,1/\epsilon)$ for any $\mu>0$. They also showed that their framework cannot obtain algorithms running in time better than $\mathsf{poly}(k^{\log \log k})$.

 Lovett and Zheng \cite{lovett2015improved} improved on this to show that
the time complexity of this problem is at most $\mathsf{poly}(k^{\log \log k},n,1/\epsilon)$ for any $\mu>0$. Interestingly, while their algorithm matches the lower bound in \cite{wigderson2012population}, their algorithm departs from the framework of \cite{wigderson2012population}.  This offers the possibility that one might be able to achieve better algorithms by extending the techniques of \cite{lovett2015improved}.  Another interesting feature of this problem  is that the algorithmic complexity of the problem is polynomial in the sample complexity of the problem. This seems to have been first explicitly mentioned in \cite{lovett2015improved}  though they refer to \cite{batman2013finding, moitra2013polynomial}.  Thus, it suffices to focus on bounding the sample complexity of the noisy population recovery problem (which is a purely information theoretic quantity).

In this paper, we improve on the results of  \cite{lovett2015improved} and show that for any $\mu>0$, the time complexity of noisy population recovery problem is at most $\mathsf{poly}(k,n,1/\epsilon)$.
This is the first polynomial time algorithm
for any $\mu<1$. The following is our main theorem.

\begin{theorem}\label{thm:1}
For any $\mu>0$, there exists an algorithm for the noisy population recovery problem, running in time $\mathsf{poly}((k/\epsilon)^{{O}_{\mu}(1)},n)$. Here $O_{\mu}(1) = \tilde{O}(1/\mu^4)$. \end{theorem}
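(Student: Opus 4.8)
The starting point is the known reduction (from \cite{batman2013finding,moitra2013polynomial,lovett2015improved}, cited above) that the algorithmic complexity is polynomial in the sample complexity, so it suffices to bound the latter. By the standard XOR trick, estimating $f(x)$ for an arbitrary $x$ reduces to estimating $f(0^n)$: replacing every noisy sample $y$ by $y\oplus x$ produces noisy samples from the shifted distribution $z\mapsto f(z\oplus x)$, and the noise channel commutes with XOR, so $f(x)$ is the $0^n$-probability of the shifted distribution (the accompanying task of outputting the $\le k$ heavy strings is subsumed by the same reduction). Since both the noise channel and the functional $f\mapsto f(0^n)$ are invariant under permutations of the coordinates, I would look for an estimator of the form $y\mapsto\phi(|y|)$ with $\phi\colon\{0,1,\dots,n\}\to\R$; its variance is at most $\|\phi\|_\infty^2$, and its bias equals $\sum_{z\in\supp(f)}f(z)\bigl((M\phi)(|z|)-[|z|{=}0]\bigr)$, where $M(w,j)$ is the probability that a weight-$w$ string is sent by the noise to a weight-$j$ string. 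Thus the whole problem reduces to the following approximation-theoretic task: build $\phi$ with $\|\phi\|_\infty=\mathrm{poly}\bigl((k/\epsilon)^{O_\mu(1)},n\bigr)$ such that $(M\phi)(w)=[w{=}0]$ holds, up to total $f$-weighted error $\epsilon$, on the $\le k$ Hamming weights occurring in $\supp(f)$; a Chebyshev/averaging bound then converts $\|\phi\|_\infty^2/\epsilon^2$ samples into the claimed estimate.

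The exact solution $\phi^\star=M^{-1}e_0$ — equivalently the symmetrization of the Möbius/Fourier inversion $f(0^n)=\sum_{S}\mu^{-|S|}\widehat{g}(S)$, where $g=T_\mu f$ — is useless directly: $M$ is exponentially ill-conditioned (each coordinate's noise shifts the mean observed weight by only $\mu$ while the observed weight has spread $\Theta(\sqrt n)$), so $\|\phi^\star\|_\infty=\exp(\Omega_\mu(n))$. Following the abstract's "noise attenuated Möbius inversion," the idea is to damp the inversion: replace the blow-up factor $\mu^{-|S|}$ by a bounded analytic surrogate — equivalently, regularize the contour-integral representation of $\phi$ built from the row generating functions $G_w(x)=\bigl(\tfrac{1+\mu}2 x+\tfrac{1-\mu}2\bigr)^w\bigl(\tfrac{1-\mu}2 x+\tfrac{1+\mu}2\bigr)^{n-w}$ of $M$. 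The attenuation is tuned so that $\|\phi\|_\infty$ stays polynomial while $(M\phi)(w)$ still equals $[w{=}0]$ exactly for all small weights $w\le W$ and is merely bounded for $w>W$.

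The technical heart is controlling the defect $(M\phi)(w)-[w{=}0]$ on $\supp(f)$, split at the threshold $W$. For the light strings $w\le W$ I would invoke the robust local inverse of Moitra and Saks~\cite{moitra2013polynomial}: their complex-analytic construction supplies, for exactly this kind of single-coordinate noise operator, a bounded function inverting it exactly up to Hamming weight $W$ with $\ell_\infty$ norm only $\exp(O_\mu(W))$; taking $W=\tilde\Theta(\mu^{-O(1)}\log(k/\epsilon))$ keeps this norm polynomial and makes the light-string defect vanish. For the heavy strings $w>W$, of which there are at most $k$, the defect can a priori be as large as $\|\phi\|_\infty$, and here I would import the mechanism of Lovett and Zheng~\cite{lovett2015improved} for bounding the total $f$-weighted defect carried by the high-weight part of $f$; the crucial point of combining it with a \emph{single} noise-attenuated estimator, rather than re-running the construction scale by scale, is precisely to avoid the per-scale loss that produces the $k^{\log\log k}$ bound in \cite{lovett2015improved}.

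Finally one optimizes parameters: the threshold $W$ forced by the heavy-part argument, fed into the robust-local-inverse norm bound $\exp(\tilde O(\mu^{-O(1)})\cdot W)$ and then into the variance-to-samples conversion, yields sample — and hence algorithmic — complexity $\mathrm{poly}\bigl((k/\epsilon)^{O_\mu(1)},n\bigr)$, with the power of $1/\mu$ in $W$ multiplying the power of $1/\mu$ in the norm base to give $O_\mu(1)=\tilde O(1/\mu^4)$ after tracking the constants. I expect the main obstacle to be exactly the gluing in the third paragraph: making the attenuated inverse simultaneously "local enough" (exact up to a weight $W=\tilde\Theta(\log k)$) and "robust enough" (polynomial $\ell_\infty$ norm), and interfacing it with the Lovett–Zheng heavy-string bound without paying per scale — these requirements pull against one another, and the quantitative balance between them is what forces the $1/\mu^4$ in the exponent.
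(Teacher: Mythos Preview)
You correctly locate the crux at the heavy-string gluing, but the resolution you sketch does not work, and the gap is structural rather than quantitative. You write that for $w>W$ you would ``import the mechanism of Lovett and Zheng for bounding the total $f$-weighted defect carried by the high-weight part of $f$.'' Lovett--Zhang never bound the $f$-mass on heavy strings---they cannot, since $f$ may be supported entirely on strings of weight $>W$, giving $\sum_{|z|>W}f(z)=1$. Their mechanism (Lemma~\ref{lem:lovettlem} here) is to replace $f$ by $g(x)=f(x)\cdot(T_\mu\mathbf{1}_E)(x)$ for a set $E$ depending on the actual support points, so that $g$, not $f$, decays as $e^{-\mu^2|x|/2}$ on heavy strings; one then estimates $\widehat g(S)$ from samples of $T_\mu f$ via the operator identity $T_\mu X_S f=T_{\mu,S}X_S T_{\mu,S}^{-1}T_\mu f$ (Claim~\ref{clm:att-compute}). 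Both $E$ and the resulting estimator depend on the individual support points, not just on their Hamming weights, so this mechanism cannot be grafted onto a weight-only estimator $y\mapsto\phi(|y|)$. Within the purely symmetric framework your bias is $\sum_z f(z)\bigl((M\phi)(|z|)-[|z|{=}0]\bigr)$; making it $\le\epsilon$ uniformly over $k$-sparse $f$ forces $|(M\phi)(w)|\le\epsilon$ for every $w\ge 1$, i.e.\ $M\phi\approx e_0$, and since $M$ has eigenvalues $\mu^0,\dots,\mu^n$ on the Krawtchouk basis this entails $\|\phi\|_\infty=\exp(\Omega_\mu(n))$.

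Two secondary mismatches reinforce this. First, the Moitra--Saks robust local inverse (Theorem~\ref{thm:moitra-saks}) is for the lossy-model matrix $A_{\mu,n}(i,j)=\binom{i}{j}\mu^j(1-\mu)^{i-j}$, not for your bit-flip weight-transition matrix $M$, and it delivers $\|A_{\mu,n}v-e_0\|_\infty\le\epsilon$ at \emph{all} rows rather than exact inversion up to row $W$. Second, the paper does not apply it to $M$ at all. It works with a \emph{non-symmetric} linear functional $\ell$ whose Fourier support sits in the downset $C^{\downarrow}$ generated by the light support points (Theorem~\ref{thm:MS1}): Moitra--Saks is invoked with the auxiliary parameter $\delta=\mu^2/16$ on $A_{\delta,r}$ to select the coefficients of $\ell$ in the basis $\{\mathsf{AND}_{\delta,z}\}_{z\in C^{\downarrow}}$. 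The restriction to $C^{\downarrow}$ (size $\le k\cdot 2^r$, versus $\binom{n}{\le r}$ for a symmetric $\ell$) is what keeps $\|\widehat\ell\|_{L_1}\le k^2(1+2\delta)^r(2/\eta)^{\delta^{-1}\log(2\delta^{-1})}$ polynomial, and the balance $\delta=\Theta(\mu^2)$, $r=\tilde\Theta(\mu^{-4}\log(k/\epsilon))$ against the $e^{-\mu^2 r/2}$ decay of $g$ is where the exponent $\tilde O(1/\mu^4)$ actually comes from.
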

For the ensuing discussion, we first fix some preliminaries.

\subsection{Preliminaries}
In this section, we include some basic preliminaries concerning Fourier expansion and noise operators.
Let $f: \{0,1\}^n \rightarrow \mathbb{R}$. Recall that any such $f$
can be expressed uniquely as a linear combination of {\em characters}, where for $S \subseteq [n]$, the character $\chi_S(x)$
is equal to $\prod_{i \in S}(-1)^{x_i}$.    For $S \subseteq [n]$, \rnote{the {\em Fourier coefficient}}
$\widehat{f}(S)$ is defined to be $\widehat{f}(S)=\sum_{x \in \bitsn} f(x)\chi_S(x)$. With this definition, it follows that $f(x) =  \sum_{S \subseteq [n]} f_S \cdot \chi_S(x)$ where
$f_S = 2^{-n} \cdot |\widehat{f}(S)|$. We define $\Vert f\Vert_1 = \sum_x |f(x)|$ and $\Vert \widehat{f} \Vert_{L_1} = 2^{-n} \sum_S |\widehat{f}(S)| = \sum_S |f_S|$. Also we define the support of the Fourier spectrum be $\mathsf{supp}(\widehat{f})=\{S:\widehat{f}(S)\not= 0\}$.
~\\
Let $\mathcal{F}$ be the space of real-valued functions on $\{0,1\}^n$. For $S \subseteq [n]$, we define the operator $X_S : \mathcal{F} \rightarrow \mathcal{F}$ as,
$$
(X_S  f)(x) =f(x) \cdot \chi_S(x),
$$
where $f \in \mathcal{F}$. Next, we define the Bonami-Beckner noise operator.  For $\mu>0$ and $i\in [n] $,  define $T_{\mu ,i} :\mathcal{F} \to \mathcal{F} $ to be the operator that only adds noise  in coordinate $i$. In other words,
$$
(T_{\mu, i}f)(x)= \frac{1+\mu}{2}\cdot f(x)+\frac{1-\mu}{2}\cdot f(x^i),
$$
where $x^i$ is the element obtain by flipping the $i$-th bit of x.
For $S\subset [n] $ define the operator $T_{\mu,S}:\mathcal{F}\to\mathcal{F} $ to be the tensor product of $T_{\mu,i}$ for $i \in S$.  In other words,
$$
(T_{\mu, S}f)(x)= \sum_{T \subseteq S} \frac{(1+\mu)^{|T|} \cdot (1-\mu)^{|S \setminus T|}}{2^{|S|}} \cdot f(x^T),
$$
where $x^T$ is obtained by flipping $x$ in the coordinates in $T$. We define the Bonami-Beckner operator $T_{\mu} : \mathcal{F} \rightarrow \mathcal{F}$ as $T_{\mu} = T_{\mu, [n]}$. Another way to define the action of $T_{\mu}$ is the following. Let $D_{\mu}$ be the product distribution on $\{0,1\}^n$ such that for all $i \in [n]$, $\Pr[e_i=0]=(1+\mu)/2$, $\Pr[e_i=1]=(1-\mu)/2$. Then,
$$
(T_{\mu} f)(x)=\mathbb{E}_{e\sim D_{\mu}}[f(x+e)].
$$
This definition implies $|(T_{\mu} f )(x)|\leq 1,\forall x$.
\subsubsection{Robust local inverse for the noise matrix}
Let us define the matrix $A_{\mu, n} \in \mathbb{R}^{(n+1) \times (n+1)}$  as
$$
A_{\mu, n}(i,j) = \binom{i}{j} \cdot \mu^j \cdot (1-\mu)^{i-j}.
$$
We index $[n+1]$ by $0 \le i \le n$ and $\binom{i}{j}$ is defined to be $0$ if $j>i$. In a key part of the paper, we will use the following key theorem from \cite{moitra2013polynomial}.
\begin{theorem}\label{thm:moitra-saks}\emph{(Moitra-Saks~\cite{moitra2013polynomial})}
For any $\epsilon>0$, there exists $v \in \mathbb{R}^{n+1}$ such that $\Vert A_{\mu,n} \cdot v - e_0 \Vert_{\infty} \le \epsilon$, $\Vert v \Vert_{\infty} \le (2/\epsilon)^{(1/\mu) \cdot \log(2/\mu)}$ and the zeroth coordinate of $A_{\mu,n} \cdot v$ is $1$. Here $e_0 \in \mathbb{R}^{n+1}$ denotes the unit vector with $1$ at the zeroth coordinate. Further, $v$ can be computed in time $\mathsf{poly}(n)$.
\end{theorem}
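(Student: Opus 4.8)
Set $\lambda := (1-\mu)/\mu$, and note $\mu\lambda = 1-\mu$. If $\mu\ge 1/2$ then $\lambda\le 1$ and the \emph{exact} inverse already works: taking $v_j=(-\lambda)^j$ gives $\sum_j\binom ij\mu^j(1-\mu)^{i-j}(-\lambda)^j=(1-\mu-\mu\lambda)^i=0$ for $i\ge 1$ and $v_0=1$, while $\|v\|_\infty\le 1$. So assume $\mu<1/2$; now $\lambda>1$ and the exact inverse has $\|v\|_\infty=\lambda^n$, far too large, so the plan is to construct a \emph{regularized} inverse. I realize $v$ as $v_j:=[z^j]\,g(z)$ with $g(z):=\Phi(z)/(1+\lambda z)$, where $\Phi$ is a polynomial (say of degree $D$) to be chosen with $\Phi(0)=1$ — which forces $v_0=g(0)=1$, i.e.\ the zeroth coordinate of $A_{\mu,n}v$ equals $1$ — and $\Phi(-1/\lambda)=0$ — which cancels the pole, so $g$ is itself a polynomial and $v$ is supported on $\{0,\dots,D-1\}$ (we take $D-1\le n$; for the finitely many $n<D$ the exact inverse already meets the desired bound).

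A short generating-function computation then gives, for $1\le i\le n$,
\[
(A_{\mu,n}v)_i \;=\; \mu^i\,[z^i]\!\left(\Phi(z)(1+\lambda z)^{i-1}\right)\;=\;(1-\mu)^i\,P(i-1),
\qquad
P(x):=\sum_{m=0}^{D-1}\Phi_{m+1}\lambda^{-m-1}\binom xm ,
\]
and moreover $\Phi(-1/\lambda)=0$ is equivalent to $P(-1)=1$. Hence the task reduces to a one-variable extremal problem: \emph{find a polynomial $P$ with $P(-1)=1$ and $|P(j)|\le \epsilon(1-\mu)^{-j-1}$ for $0\le j\le n-1$, of degree $D-1$ as small as possible}. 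From such a $P$ one reads off $\Phi$, hence $v$; writing the coefficients of $P$ as finite differences $a_m=\sum_{l\le m}(-1)^{m-l}\binom ml P(l)$ and bounding them against the envelope gives $|a_m|\le \frac{\epsilon}{1-\mu}\big(\tfrac{2-\mu}{1-\mu}\big)^m$, and then, since $v_j=(-\lambda)^j\sum_{m\ge j}(-1)^m a_m$ for $j<D$, one obtains $\|v\|_\infty\le \lambda^{D-1}\sum_m|a_m|\le \epsilon\,(\tfrac{2-\mu}{\mu})^{D}$.

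So the heart of the argument is to build $P$ of small degree, and this is where I expect the real difficulty to lie. It is a Chebyshev-type extrapolation problem under a shaped envelope: $P$ must drop by a factor $1/\epsilon$ from $x=-1$ to $x=0$, yet stay beneath $\epsilon(1-\mu)^{-x-1}$ — which only exceeds $1$ around $x\approx \tfrac1\mu\log\tfrac1\epsilon$ — on \emph{all} of $\{0,1,\dots,n-1\}$. The naive choice $P(x)=(-1)^D\binom xD$ (zeros at $0,\dots,D-1$) satisfies the envelope up to $x=D-1$ but then \emph{overshoots} badly, reaching $\approx (e/\mu)^D$ near $x\approx D/\mu$ while the envelope there is only $\approx \epsilon\,e^{D}$. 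The fix is to keep most zeros of $P$ at $0,1,2,\dots$ (where the envelope is tightest) but to deliberately redistribute a constant fraction of them into the ``transition band'' near $x\approx \tfrac1\mu\log\tfrac1\epsilon$ so as to damp $P$ below the envelope there; choosing the zero locations and verifying the pointwise envelope bound across the band is the main obstacle, and it is exactly here that the $1/\mu$ in the final exponent is paid. One checks that $D=O\!\big(\tfrac1\mu\log\tfrac1\epsilon\big)$ zeros suffice, and then the bound of the previous paragraph yields $\|v\|_\infty\le(2/\epsilon)^{O((1/\mu)\log(1/\mu))}$; tightening the construction and constants gives the stated $(2/\epsilon)^{(1/\mu)\log(2/\mu)}$. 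Finally, $\Phi$ — hence $v$ — is given by an explicit closed form (equivalently, $v$ solves the triangular system $A_{\mu,n}v=w$ for the explicit target $w$ above, which has $w_0=1$ and $\|w-e_0\|_\infty\le\epsilon$), so $v$ is computable in $\mathrm{poly}(n)$ time, as required.
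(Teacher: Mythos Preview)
The paper does not prove this theorem from scratch. In Appendix~\ref{app:MS} it simply invokes Theorem~2.2 of Moitra--Saks~\cite{moitra2013polynomial} as a black box --- which furnishes $w$ with $\|A_{\mu,n}w-e_0\|_\infty\le\epsilon_0$ and $\|w\|_\infty\le(1/\epsilon_0)^{(1/\mu)\log(2/\mu)}$, computable by linear programming --- and then rescales $v=w/\alpha_0$ (where $\alpha_0=(A_{\mu,n}w)_0\in[1-\epsilon_0,1+\epsilon_0]$) to force the zeroth coordinate of $A_{\mu,n}v$ to be exactly $1$, with $\epsilon_0=\epsilon/(1+\epsilon)$ chosen so the remaining coordinates stay below $\epsilon$. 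That rescaling is the entire content of the paper's derivation.

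Your proposal is therefore attempting something much more ambitious: a self-contained proof of the Moitra--Saks bound itself. Your generating-function reduction is correct and clean --- the identities $(A_{\mu,n}v)_i=(1-\mu)^iP(i-1)$ and $\Phi(-1/\lambda)=0\Leftrightarrow P(-1)=1$ check out, as does the bound $\|v\|_\infty\le\epsilon\cdot((2-\mu)/\mu)^{D}$ via finite differences of $P$. But the core step --- actually \emph{constructing} a polynomial $P$ of degree $O((1/\mu)\log(1/\epsilon))$ with $P(-1)=1$ and $|P(j)|\le\epsilon(1-\mu)^{-j-1}$ for all $0\le j\le n-1$ --- is precisely what you have not done. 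You flag this yourself (``this is where I expect the real difficulty to lie'', ``choosing the zero locations and verifying the pointwise envelope bound across the band is the main obstacle''), and then simply assert that ``one checks that $D=O((1/\mu)\log(1/\epsilon))$ zeros suffice'' and that ``tightening the construction and constants gives the stated'' exponent. That assertion \emph{is} the Moitra--Saks theorem; the zero-redistribution heuristic you describe is plausible but the verification is genuinely delicate (and in \cite{moitra2013polynomial} is carried out by a rather different, complex-analytic argument). As written, your proposal is a correct reduction to an unsolved extremal problem, not a proof. If the goal is merely to match the paper, the two-line citation-plus-rescaling suffices; if the goal is a standalone proof, the polynomial construction must actually be supplied.
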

The non-trivial aspect of the above theorem is that while $A_{\mu,n}$ has very small singular values and as a result, $\Vert A_{\mu,n}^{-1} \cdot e_0 \Vert_{\infty}$ can be exponentially large in $n$, by settling for an $\epsilon$-approximate inverse, it is possible to achieve a significantly better bound. Unfortunately, Theorem~\ref{thm:moitra-saks} is not exactly stated in these words in \cite{moitra2013polynomial} though it follows very easily from the results there.  In Appendix~\ref{app:MS}, we sketch the details on how to obtain Theorem~\ref{thm:moitra-saks} from the results in  \cite{moitra2013polynomial}.

\subsubsection{M\"{o}bius inversion}  Let $(P, \preceq)$ be a poset.  Let $\mathcal{F}_P$ be the space of real-valued functions on $P$. Define $\mu: P  \times P \rightarrow \mathbb{R}$ recursively as follows:
$$
\textrm{For } x\in P, \ \mu(x,x)=1.
$$ $$
\textrm{For } x, y \in P,  \ \mu(x,y) = \mathbf{1}_{x \preceq y}   \cdot \bigg( \sum_{x \preceq z \prec y} -\mu(x,z)\bigg).
$$
We define $\boldsymbol{\zeta}: \mathcal{F}_P \rightarrow \mathcal{F}_P$ and $\mathbf{\mu}: \mathcal{F}_P \rightarrow \mathcal{F}_P$ as
$$
(\boldsymbol{\zeta} f) (x) = \sum_{x \preceq y} f(y) \ \textrm{ and }  \ (\boldsymbol{\mu} f) (x) = \sum_{x \preceq y} \mu (x,y) \cdot f(y).
$$
It is well known (see \cite{Stanley:86}) that the transforms  $\boldsymbol{\zeta}$ and $\boldsymbol{\mu}$ are inverses of each other.
$\boldsymbol{\mu}$ is usually referred to as the M\"{o}bius transform of the poset $P$. While $\boldsymbol{\zeta}$ is always well-conditioned (i.e. $\Vert \boldsymbol{\zeta} \Vert_{1 \rightarrow \infty} \le 1$), the same  is not always true for $\boldsymbol{\mu}$. In particular, the entries of matrix defined by $\mu$ can be exponentially large in $|P|$.
However, in this paper, we consider a special kind of poset for which $\Vert \boldsymbol{\mu} \Vert_{1 \rightarrow \infty}$ is bounded.  To state the next proposition, we will require the following definition. \begin{definition} For $x \in \mathcal{P}([n])$,  define $x^{\downarrow} = \{y : y \preceq x\}$. For $C \subseteq \mathcal{P}([n])$, define $C^{\downarrow}  = \cup_{x \in C} x^{\downarrow}$. We read $C^{\downarrow}$ as the ``downset" generated by $C$.  Also, if $C$ is a set such that for any $x \in C$, the set $x^{\downarrow} \subseteq C$, we say $C$ is ``downward closed". Note that since the underlying poset is $\mathcal{P}([n])$, for $x, y \in \mathcal{P}([n])$, $x \preceq y$ is equivalent to $x \subseteq y$.
\end{definition}

\begin{proposition}\label{prop:mobius}
Let $C \subseteq \mathcal{P}([n])$ and $C^{\downarrow} = \{y : \exists x \in C, \ y \preceq x\}$. Consider the poset defined by $C^{\downarrow}$ ordered by set inclusion. Then,
the M\"{o}bius transform $\boldsymbol{\mu}$ for this poset is defined by
$$
(\boldsymbol{\mu} f) (x)  = \sum_{x \preceq y} (-1)^{| y \setminus x| } \cdot f(y).
$$
\end{proposition}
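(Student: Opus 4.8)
The plan is to exploit the fact that $C^{\downarrow}$, although only a sub-poset of the Boolean lattice $\mathcal{P}([n])$, is \emph{downward closed}, and then to invoke the standard principle that the M\"{o}bius function of a locally finite poset is determined entirely by its intervals. First I would record the elementary observation that $C^{\downarrow}$ is downward closed: if $y \in C^{\downarrow}$ then $y \preceq c$ for some $c \in C$, hence $y^{\downarrow} \subseteq c^{\downarrow} \subseteq C^{\downarrow}$. Consequently, for any $x, y \in C^{\downarrow}$ with $x \preceq y$, the interval $[x,y]$ computed inside the poset $C^{\downarrow}$ coincides with the interval $[x,y]$ computed inside $\mathcal{P}([n])$: every $z$ with $x \preceq z \preceq y$ satisfies $z \preceq y$, so $z \in y^{\downarrow} \subseteq C^{\downarrow}$.

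Next I would argue that this forces the M\"{o}bius function of the poset $C^{\downarrow}$ to agree with that of $\mathcal{P}([n])$ on all pairs $x \preceq y$ in $C^{\downarrow}$. This is immediate from the recursive definition of $\mu$ given in the text: the value $\mu(x,y)$ is computed by a recursion referring only to elements $z$ with $x \preceq z \prec y$, and by the previous paragraph the set of such $z$ is the same whether one works in $C^{\downarrow}$ or in $\mathcal{P}([n])$. A straightforward induction on $|y \setminus x|$ then shows the two M\"{o}bius functions coincide on $C^{\downarrow}$. (Equivalently, one may simply cite the general fact from \cite{Stanley:86} that $\mu(x,y)$ depends only on the interval $[x,y]$.)

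Finally I would recall, or prove from the recursion, the classical formula for the M\"{o}bius function of the Boolean lattice, namely $\mu_{\mathcal{P}([n])}(x,y) = (-1)^{|y\setminus x|}$ for $x \subseteq y$. To make this self-contained: the base case $\mu(x,x)=1$ is clear, and for $x \subsetneq y$, writing $m = |y \setminus x| \ge 1$ and using the inductive hypothesis $\mu(x,z) = (-1)^{|z\setminus x|}$ for $x \preceq z \prec y$, the binomial theorem gives
$$
-\sum_{x \preceq z \prec y} (-1)^{|z \setminus x|} \;=\; -\sum_{j=0}^{m-1}\binom{m}{j}(-1)^{j} \;=\; -\bigl(-(-1)^{m}\bigr) \;=\; (-1)^{|y\setminus x|},
$$
which is exactly the value prescribed by the recursion for $\mu(x,y)$. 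Combining the three steps, $\mu(x,y) = (-1)^{|y\setminus x|}\cdot\mathbf{1}_{x\preceq y}$ on $C^{\downarrow}$, and substituting into $(\boldsymbol{\mu} f)(x) = \sum_{x\preceq y}\mu(x,y)\,f(y)$ yields the claimed identity.

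The only point requiring genuine care is the first step — verifying that downward-closure really does make the intervals of $C^{\downarrow}$ agree with those of $\mathcal{P}([n])$ — since this is the whole content of the proposition: for an arbitrary sub-poset of $\mathcal{P}([n])$ the M\"{o}bius function can have exponentially large entries, so the structural hypothesis on $C^{\downarrow}$ cannot be dropped. Once that is in place, the rest is routine.
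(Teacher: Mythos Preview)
Your proof is correct, but it is organized differently from the paper's. The paper does not compute $\mu(x,y)$ from its recursive definition at all; instead it takes the proposed formula for $\boldsymbol{\mu}$ and verifies directly that $\boldsymbol{\mu}\circ\boldsymbol{\zeta}$ is the identity, expanding $(\boldsymbol{\mu}\circ\boldsymbol{\zeta}f)(x)=\sum_{x\preceq y\preceq z}(-1)^{|y\setminus x|}f(z)$ and observing that downward-closedness of $C^{\downarrow}$ makes the inner sum $\sum_{x\preceq y\preceq z}(-1)^{|y\setminus x|}$ vanish for $z\succ x$. Your route, by contrast, first argues that downward-closedness forces each interval $[x,y]$ in $C^{\downarrow}$ to coincide with the corresponding Boolean interval, then invokes the interval-locality of the M\"{o}bius function to import $\mu(x,y)=(-1)^{|y\setminus x|}$ from $\mathcal{P}([n])$. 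Both arguments pivot on the same structural fact and the same binomial identity; the paper's version is a shade more direct (one verification rather than an induction plus a citation), while yours is more conceptual in that it names the underlying principle --- M\"{o}bius functions depend only on intervals --- explicitly.
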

\begin{proof}
Since it is obvious that $\boldsymbol{\zeta}$ and $\boldsymbol{\mu}$ are invertible transforms, we will just verify that $\boldsymbol{\mu} \circ \boldsymbol{\zeta} : f  \mapsto f$.  To see this,
\begin{eqnarray*}
(\boldsymbol{\mu} \circ \boldsymbol{\zeta} f)(x)  &=& \sum_{x \preceq y} (-1)^{| y \setminus x| } \cdot (\boldsymbol{\zeta}f)(y), \\
&=&  \sum_{x \preceq y} (-1)^{| y \setminus x| } \sum_{y \preceq z} f(z) = \sum_{x \preceq y \preceq z} (-1)^{| y \setminus x| } \cdot f(z).
\end{eqnarray*}
Since the set $C^{\downarrow}$ is downward closed,  it is easy to see that for any $z \succ x$, the sum $\sum_{x \preceq y \preceq z} (-1)^{| y \setminus x| }=0$.  This implies that
$(\boldsymbol{\mu} \circ \boldsymbol{\zeta} f)(x)  = f(x)$ which proves the claim.

\end{proof}

\section{Proof Overview}
We recall that the samples available to learner are obtained in the following manner:
First, an element of $\{0,1\}^n$ is sampled according to $f$ and then each coordinate is flipped independently with probability $(1-\mu)/2$. In other words,
 we have observations from the distribution $T_{\mu} f$ and we want to obtain an estimate of $f$.
Dvir, et al.~\cite{dvir2012restriction} gave a reduction to the case that there is a known subset $X$ of size $2k$
that contains the support of the distribution; for convenience we rescale parameters so that $|X|=k$. Thus, from now onwards, we can assume that we know the support
of $f$ and our task is to estimate the weight assigned by $f$ to these points.

Let us assume that the   support of
$f$ is $\{x_1, \ldots, x_k\}$. To prove Theorem~\ref{thm:1}, it suffices to give an algorithm to compute $f(x_1)$ up to error $\epsilon$. We first show that without loss of generality, we can assume that $x_1=0$ (i.e. the origin). To see this, note that the distribution  $x_1 \oplus T_{\mu} f$ is the same as $T_{\mu} g$ where $g(x) = f(x \oplus x_1)$.

A very basic observation concerning $T_{\mu} f$  is that $\widehat{f}(S)$ can be computed efficiently from $T_{\mu} f$ as long as $|S|$ is small. The following claim formalizes this. For the rest of the discussion, let $\gamma(x,m,z)$ be defined as  $\gamma(x, m, z)  = \mathsf{poly}((1/x)^m, z)$.
\begin{claim}\label{clm:Fourier-compute}
For $S \subset [n]$, $\widehat{f}(S)$ can be computed to additive accuracy $\epsilon$ with probability $1-\kappa$ using $\gamma(\mu, |S|, \epsilon) \cdot \log (1/\kappa)$ samples from $T_{\mu} f$ in time
$n \cdot \gamma(\mu, |S|, \epsilon) \cdot \log (1/\kappa)$.
\end{claim}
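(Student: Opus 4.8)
The plan is to exploit the fact that the Bonami--Beckner operator $T_\mu$ acts diagonally in the Fourier basis, so that $\widehat{f}(S)$ is, up to an explicit scalar, nothing more than an expectation under the distribution $T_\mu f$ from which we are allowed to sample. Concretely, I would first record the identity $\widehat{T_\mu f}(S) = \mu^{|S|}\cdot\widehat{f}(S)$ for every $S\subseteq[n]$. This follows coordinate by coordinate from the definition of $T_\mu$ as the tensor product of the single-coordinate operators $T_{\mu,i}$: a one-line computation using $\chi_S(x^i) = (-1)^{\mathbf{1}[i\in S]}\,\chi_S(x)$ shows that $T_{\mu,i}$ multiplies $\widehat{f}(S)$ by $\mu$ when $i\in S$ and leaves it unchanged when $i\notin S$, and composing over all $i\in[n]$ gives the factor $\mu^{|S|}$.

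Next I would observe that, since $f$ is a probability distribution, so is $T_\mu f$ (it is the law of $x\oplus e$ with $x\sim f$ and $e\sim D_\mu$), and hence $\widehat{T_\mu f}(S) = \sum_{x}(T_\mu f)(x)\,\chi_S(x) = \mathbb{E}_{y\sim T_\mu f}[\chi_S(y)]$, a quantity in $[-1,1]$. Given i.i.d.\ samples $y^{(1)},\dots,y^{(N)}$ from $T_\mu f$, the empirical mean $Z=\frac1N\sum_{j=1}^N\chi_S(y^{(j)})$ is an unbiased estimator of $\widehat{T_\mu f}(S)$ whose summands lie in $[-1,1]$, so by Hoeffding's inequality $N=O(\delta^{-2}\log(1/\kappa))$ samples suffice for $|Z-\widehat{T_\mu f}(S)|\le\delta$ with probability at least $1-\kappa$. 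The algorithm then outputs $\mu^{-|S|}Z$ as its estimate of $\widehat{f}(S)$; the error is $\mu^{-|S|}\,|Z-\widehat{T_\mu f}(S)|\le\mu^{-|S|}\delta$, so taking $\delta=\epsilon\,\mu^{|S|}$ yields additive accuracy $\epsilon$ using $N=O(\mu^{-2|S|}\epsilon^{-2}\log(1/\kappa))=\gamma(\mu,|S|,\epsilon)\cdot\log(1/\kappa)$ samples. Finally, each evaluation of $\chi_S(y^{(j)})=\prod_{i\in S}(-1)^{y^{(j)}_i}$ costs $O(n)$ time, for a total running time of $n\cdot\gamma(\mu,|S|,\epsilon)\cdot\log(1/\kappa)$, as claimed.

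There is no real obstacle here; the statement is essentially a bookkeeping exercise once the Fourier-diagonalization identity is in hand. The only points that require care are the (unnormalized) normalization of $\widehat{\cdot}$ used in this paper --- which is precisely what makes $\widehat{f}(S)=\mathbb{E}_{x\sim f}[\chi_S(x)]$ a bounded expectation --- and the observation that the $\mu^{-|S|}$ amplification of the estimation error is exactly what produces the $(1/\mu)^{|S|}$-type blow-up in the sample complexity, so that the claim is useful only when $|S|$ is small (logarithmic in the relevant parameters).
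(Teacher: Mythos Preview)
Your proposal is correct and follows essentially the same approach as the paper: the paper's proof is the one-line observation that $\widehat{f}(S)=\mu^{-|S|}\,\mathbf{E}_{x\sim T_\mu f}[\chi_S(x)]$ together with the boundedness $|\chi_S(x)|\le 1$ and a Chernoff/Hoeffding bound. You have simply spelled out the details (the diagonalization identity, the sample-size calculation, and the $O(n)$ cost of evaluating $\chi_S$) that the paper leaves implicit.
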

\begin{proof}
Observe that $\widehat{f}(S) =\mu^{-|S|} \cdot \mathbf{E}_{x \sim T_{\mu} f} \chi(x)$. Using the fact that $|\chi(x)| \le 1$ and  applying Chernoff bound, we get the claim.
\end{proof}
Thus, for any fixed $\mu>0$, as long as $|S| = O_\mu(\log k)$, the time and sample complexity of computing $\widehat{f}(S)$ using samples from $T_{\mu} f$ is bounded by $\mathsf{poly}(n,k,1/\epsilon)$.
 For $f : \{0,1\}^n \rightarrow \mathbb{R}$, define $\mathsf{Att}(f) \subseteq \mathcal{F}$ as
 $$
 \mathsf{Att}(f) = \{g \in \mathcal{F} \ | E \subseteq \{0,1\}^n \ \textrm{and} \ \ g : x \mapsto f(x) \cdot (T_{\mu} \cdot \mathbf{1}_E)(x) \}.
 $$
 A key step in our algorithm is to generalize Claim~\ref{clm:Fourier-compute} to show that we can compute $\widehat{g}(S)$ for $g \in \mathsf{Att}(f)$ from (sample access to) $T_{\mu} f$ with the same sample complexity as Claim~\ref{clm:Fourier-compute}.
 \begin{claim}\label{clm:att-compute}
 Let $g: \{0,1\}^n \rightarrow \mathbb{R}$ be defined as $g (x) = f(x) \cdot (T_{\mu} \mathbf{1}_E)(x)$. For $S \subset [n]$, $\widehat{g}(S)$ can be computed to additive accuracy $\epsilon$ with probability $1-\delta$ using $\gamma(\mu, |S|, \epsilon) \cdot \log (1/\kappa)$ samples from $T_{\mu} f$ in time
$n \cdot \gamma(\mu, |S|, \epsilon) \cdot \log (1/\kappa)$. Here, we assume that $\mathbf{1}_{E}(\cdot)$ can be efficiently computed.
 \end{claim}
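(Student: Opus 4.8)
The plan is to mimic the proof of Claim~\ref{clm:Fourier-compute}, replacing the test function $\chi_S$ by a ``noise-corrected'' function and then applying a Chernoff bound to its empirical average. The one structural fact I will use is that $T_\mu$ is self-adjoint for the counting inner product (its matrix entry $(x,y)$ equals $D_\mu(x\oplus y)$, which is symmetric in $x,y$), so for every $\phi \in \mathcal{F}$,
$$
\E_{y \sim T_\mu f}[\phi(y)] \;=\; \sum_x f(x)\,(T_\mu \phi)(x).
$$
Since $g(x) = f(x)\cdot(T_\mu \mathbf{1}_E)(x)$, we have $\widehat g(S) = \sum_x f(x)\,\chi_S(x)\,(T_\mu \mathbf{1}_E)(x) = \sum_x f(x)\,(X_S T_\mu \mathbf{1}_E)(x)$, so I would take $\phi$ to be the solution of $T_\mu\phi = X_S T_\mu \mathbf{1}_E$, namely
$$
\phi \;:=\; T_\mu^{-1} X_S T_\mu \mathbf{1}_E .
$$
Then $\widehat g(S) = \E_{y\sim T_\mu f}[\phi(y)]$, so the average of $\phi$ over i.i.d.\ draws from $T_\mu f$ is an unbiased estimator. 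It remains only to (a) bound $\|\phi\|_\infty$ and (b) evaluate $\phi$ efficiently at a point.

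The apparent obstacle is that $T_\mu^{-1}$ is exponentially ill-conditioned in $n$. The observation that removes it is that the conjugated operator $P := T_\mu^{-1} X_S T_\mu$ acts only on the coordinates in $S$: since $T_\mu = \bigotimes_{i\in[n]} T_{\mu,i}$ and $X_S = \bigotimes_{i\in S} X_i \otimes \bigotimes_{i\notin S} I$, we get $P = \bigotimes_{i\in S} B_i \otimes \bigotimes_{i\notin S} I$ with $B_i := T_{\mu,i}^{-1} X_i T_{\mu,i}$. A direct one-coordinate calculation — writing a function of $x_i$ as $c + d\,(-1)^{x_i}$, so that $T_{\mu,i}:(c,d)\mapsto(c,\mu d)$ and $X_i:(c,d)\mapsto(d,c)$ — gives $B_i:(c,d)\mapsto(\mu d,\; \mu^{-1}c)$, whence $\|B_i\|_{\infty\to\infty}\le \mu + \mu^{-1}\le 2/\mu$ (using $\mu\le 1$). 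Since the omitted factors are identities and the $\infty\to\infty$ operator norm is multiplicative over tensor products, $\|P\|_{\infty\to\infty}\le(2/\mu)^{|S|}$, so $\|\phi\|_\infty = \|P\mathbf{1}_E\|_\infty \le (2/\mu)^{|S|}$. Thus the ill-conditioning of $T_\mu^{-1}$ never manifests globally; it only costs a factor $(1/\mu)^{|S|}$ because $T_\mu^{-1}$ is effectively applied ``within $S$''.

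The same tensor/locality structure makes $\phi$ efficiently evaluable: for fixed $y$, $\phi(y)$ depends on $y$ only through its restriction to $S$ and on the $2^{|S|}$ values $\mathbf{1}_E(y^T)$, $T\subseteq S$, with $\phi(y) = \sum_{T\subseteq S} M[\,y|_S,\,T\,]\cdot \mathbf{1}_E(y^T)$, where $M$ is the $2^{|S|}\times 2^{|S|}$ matrix $\bigotimes_{i\in S} B_i$, computed once in time $\mathsf{poly}(2^{|S|})$. So each evaluation of $\phi$ costs $2^{|S|}$ evaluations of $\mathbf{1}_E$ (efficient by hypothesis) plus $\mathsf{poly}(2^{|S|})$ arithmetic. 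Finally, by a Chernoff/Hoeffding bound, averaging $\phi$ over $O\big((2/\mu)^{2|S|}\epsilon^{-2}\log(1/\kappa)\big)$ i.i.d.\ samples from $T_\mu f$ returns $\widehat g(S)$ to additive accuracy $\epsilon$ with probability $1-\kappa$; this is $\gamma(\mu,|S|,\epsilon)\log(1/\kappa)$ samples and $n\cdot\gamma(\mu,|S|,\epsilon)\log(1/\kappa)$ time, the extra $\mathsf{poly}(2^{|S|})$ factors being absorbed into $\gamma$ up to a constant in the exponent (and being $\mathsf{poly}(k)$ in the regime $|S| = O_\mu(\log k)$ used in the paper). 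The only real content is the reduction $\widehat g(S) = \E_{y\sim T_\mu f}\big[(T_\mu^{-1} X_S T_\mu \mathbf{1}_E)(y)\big]$ in the first paragraph together with the locality of $T_\mu^{-1}X_S T_\mu$; everything after that is bookkeeping.
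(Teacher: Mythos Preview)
Your proof is correct and is essentially the same as the paper's, just dualized: the paper writes $\widehat g(S)=\E_{z\sim T_\mu f}\langle T_{\mu,S}X_S T_{\mu,S}^{-1}\mathbf{1}_z,\mathbf{1}_E\rangle$ and bounds the $\ell_1\!\to\!\ell_1$ norm of $T_{\mu,S}X_S T_{\mu,S}^{-1}$ by $(1/\mu)^{|S|}$, whereas you take adjoints to write $\widehat g(S)=\E_{z\sim T_\mu f}[(T_{\mu,S}^{-1}X_S T_{\mu,S}\mathbf{1}_E)(z)]$ and bound the $\ell_\infty\!\to\!\ell_\infty$ norm of $T_{\mu,S}^{-1}X_S T_{\mu,S}$ by $(2/\mu)^{|S|}$. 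The key observation---that $T_\mu^{-1}X_S T_\mu$ acts nontrivially only on the coordinates in $S$ so the ill-conditioning of $T_\mu^{-1}$ costs only $(1/\mu)^{O(|S|)}$---is identical in both arguments.
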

 The proof of Claim~\ref{clm:att-compute} relies heavily on ideas from \cite{lovett2015improved}. We prove this in Section~\ref{sec:att-compute}.  As a consequence, we have the following corollary.

  \begin{corollary}\label{corr:att-compute}
 Let $\ell : \{0,1\}^n \rightarrow \mathbb{R}$ where $T = \Vert \widehat{\ell} \Vert_{L_1}$, $S_0 = \max_{S : \widehat{\ell}(S) \not =0} |S|$.
%
%
%
%
%
 and
$g: \{0,1\}^n \rightarrow \mathbb{R}$ be  defined as Claim~\ref{clm:att-compute}. We assume that $\ell  = \sum_S \ell_S \cdot \chi_S(x)$ where $\ell$ is specified by the list $\{\ell_S\}$.  Then,
  $\langle g , \ell \rangle$ can be computed to accuracy $\epsilon$ with probability $1-\kappa$   using $\gamma(\mu, S_0, \epsilon/T) \cdot \log (|\mathsf{supp}(\widehat{\ell})| /\kappa)$ samples from $T_{\mu} f$ in time $\gamma(\mu, |S_0|, \epsilon/T) \cdot n \cdot |\mathsf{supp}(\widehat{\ell})| \cdot \log (|\mathsf{supp}(\widehat{\ell})| /\kappa)$.
  \end{corollary}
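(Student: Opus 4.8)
The plan is to expand $\langle g,\ell\rangle$ in the Fourier basis so that it becomes a short linear combination of low-degree Fourier coefficients of $g$, and then to estimate each of those coefficients via Claim~\ref{clm:att-compute}. Using the representation $\ell = \sum_S \ell_S\,\chi_S$ and linearity,
\[
\langle g , \ell\rangle \;=\; \sum_{x} g(x)\,\ell(x) \;=\; \sum_{S} \ell_S\sum_x g(x)\,\chi_S(x) \;=\; \sum_{S\in\mathsf{supp}(\widehat\ell)} \ell_S\cdot\widehat g(S),
\]
which is a sum of at most $|\mathsf{supp}(\widehat\ell)|$ terms, each with $|S|\le S_0$, and with $\sum_S |\ell_S| = T$.

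The next step is to estimate each $\widehat g(S)$ occurring above. By Claim~\ref{clm:att-compute}, for a fixed $S$ we can produce an estimate $\widetilde{\widehat g}(S)$ with $|\widetilde{\widehat g}(S) - \widehat g(S)|\le \epsilon/T$ and failure probability at most $\kappa' := \kappa/|\mathsf{supp}(\widehat\ell)|$, using $\gamma(\mu,|S|,\epsilon/T)\cdot\log(1/\kappa')$ samples and $n\cdot\gamma(\mu,|S|,\epsilon/T)\cdot\log(1/\kappa')$ time. A union bound over the $|\mathsf{supp}(\widehat\ell)|$ relevant sets shows that with probability at least $1-\kappa$ all these estimates are simultaneously within $\epsilon/T$, and then the triangle inequality gives
\[
\Big|\sum_{S\in\mathsf{supp}(\widehat\ell)} \ell_S\,\widetilde{\widehat g}(S) \;-\; \langle g,\ell\rangle\Big| \;\le\; \sum_S |\ell_S|\cdot\frac{\epsilon}{T} \;=\; \epsilon .
\]
Since $\gamma(x,m,z)=\mathsf{poly}((1/x)^m,z)$ may be taken nondecreasing in the degree parameter $m$ (as $1/\mu\ge 1$), we can replace $\gamma(\mu,|S|,\epsilon/T)$ by $\gamma(\mu,S_0,\epsilon/T)$ throughout, and with $\kappa' = \kappa/|\mathsf{supp}(\widehat\ell)|$ the running time, summed over the $|\mathsf{supp}(\widehat\ell)|$ coefficients, is $n\cdot|\mathsf{supp}(\widehat\ell)|\cdot\gamma(\mu,S_0,\epsilon/T)\cdot\log(|\mathsf{supp}(\widehat\ell)|/\kappa)$, as claimed.

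The one delicate point — and the part I expect to be the main obstacle — is the sample complexity: estimating $|\mathsf{supp}(\widehat\ell)|$ coefficients one at a time would naively cost $|\mathsf{supp}(\widehat\ell)|$ times as many samples as the statement allows, namely only $\gamma(\mu,S_0,\epsilon/T)\cdot\log(|\mathsf{supp}(\widehat\ell)|/\kappa)$. To avoid the extra factor I would draw a \emph{single} batch of that many samples from $T_\mu f$ and reuse it for every $S$. This is sound because, as in Claim~\ref{clm:Fourier-compute}, the estimator of $\widehat g(S)$ built in the proof of Claim~\ref{clm:att-compute} is an empirical average over the samples of a quantity bounded in magnitude (depending on the sample and on $S$, but not on additional fresh randomness); hence a batch of $\gamma(\mu,S_0,\epsilon/T)\cdot\log(|\mathsf{supp}(\widehat\ell)|/\kappa)$ samples makes the Chernoff bound hold for each fixed $S$ with failure probability at most $\kappa/|\mathsf{supp}(\widehat\ell)|$, and the union bound over $S$ then applies to the common batch. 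Confirming that the estimator from Claim~\ref{clm:att-compute} indeed has exactly this ``bounded empirical average'' form is the one ingredient I would need to read off from its proof; granting that, the corollary follows.
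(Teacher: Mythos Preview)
Your proposal is correct and follows essentially the same approach as the paper: expand $\langle g,\ell\rangle=\sum_S \ell_S\,\widehat g(S)$, estimate each $\widehat g(S)$ to accuracy $\epsilon/T$ with failure probability $\kappa/|\mathsf{supp}(\widehat\ell)|$, and union-bound over a single shared batch of samples. Your explicit discussion of why a single batch suffices (the estimator in Claim~\ref{clm:att-compute} is indeed a bounded empirical average, as its proof in Section~\ref{sec:att-compute} confirms) is actually more careful than the paper's own argument, which simply asserts the sample bound.
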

\begin{proof}
 Note that $\langle \ell, g \rangle = \langle \sum_{S} \ell_S \cdot \chi_S(x) , g \rangle = \sum_S \ell_S \cdot \langle \chi_S, g \rangle = \sum_S \ell_S \cdot \widehat{g}(S)$. Claim~\ref{clm:att-compute} implies that using $\gamma(\mu, S_0, \epsilon/T) \cdot \log (|\mathsf{supp}(\widehat{\ell})| /\delta)$ samples from $T_{\mu} f$,   $\widehat{g}(S)$ can be computed  for any $S \in \mathsf{supp}(\widehat{\ell})$ to accuracy $\epsilon/T$ with confidence $1-\kappa/T$. As a corollary,  using $\gamma(\mu, S_0, \epsilon/T) \cdot \log (|\mathsf{supp}(\widehat{\ell})| /\kappa)$ samples from $T_{\mu} f$,  we can compute $\widehat{g}(S)$ for all $S \in \mathsf{supp}(\widehat{\ell})$ to accuracy $\epsilon/T$ with confidence $1-\kappa$. Further, the time complexity of this algorithm is  $\gamma(\mu, S_0, \epsilon/T) \cdot n \cdot |\mathsf{supp}(\widehat{\ell})| \cdot \log (|\mathsf{supp}(\widehat{\ell})| /\kappa)$. As a result,  $\langle \ell, g \rangle $ can be computed to accuracy $\epsilon$ in the claimed time and sample complexity.

\end{proof}

Recall that our task is to compute $f(0)$ to accuracy $\epsilon$. The way we use Corollary~\ref{corr:att-compute}  is as follows: By choosing $E \subseteq \{0,1\}^n$ and $\ell : \{0,1\}^n \rightarrow \mathbb{R}$ carefully, one can ensure that  $\langle \ell, g \rangle \approx (T_{\mu} \mathbf{1}_E)(0) \cdot f(0)$.  Thus, if we can can (approximately) compute $\langle \ell, g \rangle$, we will obtain an approximation for $f(0)$.  The function  $\ell$ is chosen so that $S_0 = O_{\mu}(\log(k/\epsilon))$, $T = (k/\epsilon)^{O_{\mu}(1)}$ and $|\mathsf{supp}(\widehat{\ell})| = (k/\epsilon)^{O_{\mu}(1)}$. Observe that if we plug in the values of $S_0$, $T$ and $|\mathsf{supp}(\widehat{\ell})|$ in Corollary~\ref{corr:att-compute}, the sample and time complexity of computing $\langle \ell, g \rangle$ (to error $\epsilon$) is $(k/\epsilon)^{O_{\mu}(1)}$ and time complexity is $\mathsf{poly}(n, (k/\epsilon)^{O_{\mu}(1)})$. The precise details of this calculation is given in Section~\ref{sec:maintheorem}.

For the moment, we elaborate on how the set $E$ and the function $\ell(\cdot)$ are chosen.
$E \subseteq \{0,1\}^n$ is chosen so that $T_{\mu} \mathbf{1}_E (\cdot)$ has the following properties:
$T_{\mu} \mathbf{1}_E (0) \ge 1/2$ and $T_{\mu} \mathbf{1}_E(x)$ decays exponentially as  $x$ moves away from the origin.
The following lemma makes this precise.
\begin{lemma}\label{lem:lovettlem}
Let $ \{x_1, \ldots, x_k\}$ where $x_1 =0$. Define the set $\mathsf{Far} = \{x_i : d_H(x_1, x_i) \ge (1/\mu^2) \cdot \log k \}$ and define the set $E = \{y \in \{0,1\}^n: d_H(x_1, y) \le d_H(x_i, y) \ \textrm{for all } \ x_i \in \mathsf{Far} \}$.  
\begin{itemize}
\item $(T_{\mu} \mathbf{1}_E)(0) \ge 1/2$.
\item For $x_i \in \mathsf{Far}$, $(T_{\mu} \mathbf{1}_E)(x_i) \le e^{-\frac{1}{2} \cdot \mu^2 \cdot d_H(x_1, x_i)}$.
\end{itemize}
Clearly, the function $\mathbf{1}_E(\cdot)$ can be computed in time $\mathsf{poly}(n,k)$. Further, $(T_{\mu} \mathbf{1}_E)(0)$ can be computed to additive error $\epsilon$ with confidence $1-\kappa$ in time $\mathsf{poly}(n,k,1/\epsilon) \cdot \log(1/\kappa)$.
\end{lemma}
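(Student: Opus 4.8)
The two bulleted inequalities are really statements about the noise operator applied to a geometric set, so the plan is to unwind $T_\mu \mathbf{1}_E$ as a probability and then estimate that probability using a Chernoff/Hoeffding bound on the Hamming weight of the noise vector. Recall $(T_\mu \mathbf{1}_E)(x) = \Pr_{e \sim D_\mu}[x \oplus e \in E]$, where each coordinate of $e$ is $1$ with probability $(1-\mu)/2$. So the first item asks: starting from the origin and flipping each bit with probability $(1-\mu)/2$, the result $y$ lands in $E$ — i.e.\ $y$ is at least as close (in Hamming distance) to $x_1 = 0$ as to every $x_i \in \mathsf{Far}$ — with probability at least $1/2$. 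The natural way to see this is a union bound over $x_i \in \mathsf{Far}$: for a fixed $x_i$ with $d := d_H(0, x_i) \ge (1/\mu^2)\log k$, the event $d_H(0,y) > d_H(x_i,y)$ means more than half of the $d$ coordinates in the support of $x_i$ got flipped; since each is flipped independently with probability $(1-\mu)/2 < 1/2$, the expected number of flips among those $d$ coordinates is $d(1-\mu)/2$, and exceeding $d/2$ is a deviation of $\mu d/2$, i.e.\ a relative deviation of $\mu/(1-\mu)$. A Chernoff bound gives $\Pr[\text{bad for }x_i] \le \exp(-c \mu^2 d)$ for an appropriate constant; choosing the threshold in the definition of $\mathsf{Far}$ (namely $d \ge (1/\mu^2)\log k$) makes this $\le 1/(2k)$ (absorbing constants; one may need $(C/\mu^2)\log k$ for a suitable $C$, but the paper's statement is stated up to such constants), and a union bound over the at most $k$ elements of $\mathsf{Far}$ yields $(T_\mu \mathbf{1}_E)(0) \ge 1/2$.

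For the second item, fix $x_i \in \mathsf{Far}$ with $d = d_H(x_1, x_i) = d_H(0,x_i)$. Now $(T_\mu \mathbf{1}_E)(x_i) = \Pr_{e}[x_i \oplus e \in E]$, and membership in $E$ requires in particular $d_H(x_1, x_i \oplus e) \le d_H(x_i, x_i \oplus e)$, i.e.\ $d_H(0, x_i \oplus e) \le \mathrm{wt}(e)$ restricted appropriately — more cleanly, it requires that $y = x_i \oplus e$ be at least as close to $0$ as to $x_i$, which (looking only at the $d$ coordinates where $x_i$ differs from $0$) forces at least $d/2$ of those coordinates of $e$ to be $1$. But under $D_\mu$ each such coordinate is $1$ only with probability $(1-\mu)/2$, so we are asking for at least $d/2$ successes out of $d$ Bernoulli$((1-\mu)/2)$ trials — again a deviation of $\mu d/2$ above the mean $d(1-\mu)/2$ — and the Chernoff bound gives $\Pr \le \exp(-\tfrac12 \mu^2 d) = e^{-\frac12 \mu^2 d_H(x_1,x_i)}$, matching the claimed bound. (One should double-check the exact Chernoff constant; the relevant form is $\Pr[\mathrm{Bin}(d,p) \ge d/2] \le \exp(-d\cdot \mathrm{KL}(1/2 \| p))$ with $p = (1-\mu)/2$, and $\mathrm{KL}(1/2\|(1-\mu)/2) \ge \mu^2/2$ by Pinsker or a direct second-order estimate, which is exactly what is needed.)

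The two computational assertions are routine. The function $\mathbf{1}_E(\cdot)$ is computable in $\mathrm{poly}(n,k)$ time because, given $y$, we just compute $d_H(x_1,y)$ and $d_H(x_i,y)$ for each $i$ and check the defining inequalities — at most $k$ Hamming-distance computations, each $O(n)$. Finally, $(T_\mu \mathbf{1}_E)(0) = \Pr_{e\sim D_\mu}[e \in E]$ can be estimated to additive error $\epsilon$ with confidence $1-\kappa$ by drawing $O((1/\epsilon^2)\log(1/\kappa))$ independent samples $e \sim D_\mu$, evaluating $\mathbf{1}_E(e)$ (using the above $\mathrm{poly}(n,k)$ procedure), and averaging; Hoeffding's inequality gives the bound, and the total running time is $\mathrm{poly}(n,k,1/\epsilon)\cdot\log(1/\kappa)$. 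The only real subtlety — and the one place to be careful — is pinning down the Chernoff constants so that the first bullet's union bound genuinely closes with the stated $\mathsf{Far}$-threshold $(1/\mu^2)\log k$ (rather than a larger constant multiple), and so that the exponent in the second bullet comes out as exactly $\tfrac12\mu^2 d_H$; both follow from the KL-divergence form of the Chernoff bound together with the elementary estimate $\mathrm{KL}(\tfrac12 \| \tfrac{1-\mu}{2}) \ge \tfrac{\mu^2}{2}$, so I expect no genuine obstacle, just bookkeeping.
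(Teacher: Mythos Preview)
Your proposal is correct and follows essentially the same route as the paper's proof: rewrite $(T_\mu \mathbf{1}_E)(x)$ as $\Pr_{e\sim D_\mu}[x\oplus e\in E]$, reduce each bad event to the binomial tail $\Pr[\sum_{j\in S_i} e_j \ge |S_i|/2]$ on the symmetric-difference coordinates, apply a Chernoff bound, and handle the first item by a union bound over $\mathsf{Far}$; the computational claims are argued identically. The only cosmetic difference is that the paper invokes a multiplicative Chernoff form while you use the KL form with the estimate $\mathrm{KL}(\tfrac12\Vert\tfrac{1-\mu}{2})\ge \mu^2/2$, but both yield the same exponent $\tfrac12\mu^2 d$.
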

While the above lemma is essentially identical to Lemma~3.2 in \cite{lovett2015improved}, it is phrased a little differently in that paper. For the sake of completeness, we reprove this lemma in Appendix~\ref{app:Lem1.2}.\\

%
%
Let $A = \mathsf{supp} (f)$ where $A = \{x_1, \ldots, x_k\}$
with $x_1 =0$.  Let $\mathbf{1}_E : \{0,1\}^n \rightarrow \{0,1\}$ be the corresponding function from Lemma~\ref{lem:lovettlem} and $g : \{0,1\}^n \rightarrow \mathbb{R}$ be defined as $g(x)=  f(x) \cdot (T_{\mu} \mathbf{1}_E)(x)$.
From Lemma~\ref{lem:lovettlem}, we get that $g(x)$ decays exponentially in $|x|$ for $|x| \ge \mu^{-2} \cdot \log k$ where $|x|$ denotes the Hamming weight of $x$.
Let $\mathcal{B}_r(0)$ denote the Hamming ball of radius $r$ around the origin.
Then, the above implies that if we set $r = O_\mu(\log (k/\epsilon))$, then $g$ essentially vanishes outside $\mathcal{B}_r(0)$.
 Next, consider a function $\ell:\{0,1\}^n \rightarrow \mathbb{R}$ where we set $T =\Vert \widehat{\ell} \Vert_{L_1}$, $S_0 = \max_{S \in \mathsf{supp}(\widehat{\ell})} |S|$ such that $\ell(0) =1$ and $|\ell(x)| \le \eta$ for $x \in \mathsf{supp}(f) \cap \mathcal{B}_r(0)$. Then, it follows that (as shown in Section~\ref{sec:maintheorem})
$$
\big| \langle \ell , g \rangle  - f(0 ) \cdot (T_{\mu} \mathbf{1}_E)(0)\big|  \le  \eta + T \cdot e^{-\frac{\mu^2 \cdot r}{2}} .
$$
\ignore{
Thus, we have,
\begin{eqnarray*}
\big| \langle \ell , g \rangle  - \ell(0 ) \cdot g(0)\big| &\le& \bigg| \sum_{x : 0< |x| \le r} \ell (x) \cdot g(x)\bigg| +\bigg| \sum_{x : |x| > r} \ell (x) \cdot g(x)\bigg|,  \\
&\le& \sup_{x \in \mathsf{supp}(f) : 0< |x| \le r} |\ell(x)|  + \sup_{x \in \mathsf{supp}(f) :  |x| > r} |g(x)| \cdot  \sup_{x} |\ell(x)|, \\
&\leq&  k \cdot  \epsilon + T \cdot e^{-\frac{\mu^2 \cdot r}{4}}.
\end{eqnarray*}}
If we set $\eta = \epsilon/16$, then  it just remains to bound the second term. Thus, we seek to construct $\ell: \{0,1\}^n \rightarrow \{0,1\}$ which is $1$ at the origin, at most $\eta$ (in absolute value) for $x \in \mathsf{supp}(f) \cap \mathcal{B}_r(0)$ and $T$, $S_0$ and $|\mathsf{supp}(\widehat{\ell})|$ are as small as possible. In particular, in the above error term, we have two competing parameters, namely $T$ and $r$ i.e. as $r$ increases, the value of $T = \Vert \widehat{\ell} \Vert_{L_1}$ corresponding to the optimal $\ell$, also increases. Thus, it is not immediately obvious if there exists $ \ell (\cdot)$ such that the second error term $T \cdot e^{-\frac{\mu^2 \cdot r}{2}}$ can be made vanishingly small.  However, for a careful choice of $\ell$ (as we discuss shortly), the second term can also be made $\epsilon/16$. Thus, $|\langle \ell , g \rangle - f(0 ) \cdot (T_{\mu} \mathbf{1}_E)(0)| \le \epsilon/8$. Thus, if we approximate $\langle \ell, g \rangle$ (as done in Corollary~\ref{corr:att-compute}) and $(T_{\mu} \mathbf{1}_E)(0)$ (as done in Lemma~\ref{lem:lovettlem}), we obtain an $\epsilon$-approximation to $f(0)$.

We now motivate our construction of the function $\ell(\cdot)$. For this, let $C =\mathsf{supp}(f) \cap \mathcal{B}_r(0)$ and let $C^{\downarrow}$ be the downset generated by $C$. Note that $x \in \{0,1\}^n$ can be identified as the characteristic vector of a subset of $[n]$ and hence, for the following discussion, we alternately identify $ \{0,1\}^n$ with $\mathcal{P}([n])$. We will first start with a suboptimal choice of $\ell$ which will motivate our final construction.

Corresponding to every $z \in C^{\downarrow}$, consider the monomial $\mathsf{AND}_z: \{0,1\}^n \rightarrow \{0,1\}$ defined as
$$
\mathsf{AND}_z (x_1, \ldots, x_n) = \prod_{i : z_i=1} x_i.
$$
We will define $\ell$ to be a linear combination of $\mathsf{AND}_z$ for $z \in C^{\downarrow}$ subject to the constraints
$$
\ell(x) = \begin{cases} 0 \ &\textrm{if } x \in C^{\downarrow} \setminus \{0\}, \\ 1 &\textrm{if} \ x=0.   \end{cases}
$$
Since $\ell$ is a linear combination of $\{\mathsf{AND}\}_{z \in C^{\downarrow}}$, let us assume that $\ell = \sum_{z \in C^{\downarrow}} \alpha_z \cdot \mathsf{AND}_z $. In terms of
the $\boldsymbol{\zeta}$ transform for the poset $C^{\downarrow}$, we can express $\ell$ as  $\ell =  \boldsymbol{\zeta}^T (\sum_{z \in C^{\downarrow}}\alpha_z \cdot \boldsymbol{1}_z)$, where $\boldsymbol{1}_z$ is the indicator function of $z$.
Thus,  $\alpha_z = (\boldsymbol{\mu}^T \ell)(z)$  where $\boldsymbol{\mu}$ is the M\"{o}bius transform for $C^{\downarrow}$. Applying Proposition~\ref{prop:mobius} and the value of $\ell$ on $C^{\downarrow}$, we obtain that $\alpha_z = (-1)^{|z|}$. Thus, $\ell (x) = \sum_{z \in C^{\downarrow} }(-1)^{|z|} \mathsf{AND}_z(x)$. It is not difficult to see that for this choice of $\ell$,  $ \Vert \widehat{\ell} \Vert_{L_1} \le |C^{\downarrow}| \le k \cdot 2^r$ and $\max_{S: \widehat{\ell}(S) \not =0} |S| \le r$. This choice of $\ell$ itself yields non-trivial results. In particular, in an earlier version of this paper, the authors proved Theorem~\ref{thm:1} with $\mu \gtrsim 0.555$, thereby giving the first polynomial time algorithm for noisy population recovery for any $\mu<1$.
However, to prove Theorem~\ref{thm:1} for any $\mu >0$, we require a more refined choice of $\ell$. Henceforth, let us refer to the previous choice of $\ell$ as $\ell_{0}$.  In particular, the bottleneck in our argument comes from the fact that we bound
$\Vert \widehat{\ell_0} \Vert_{L_1}$ by  $k \cdot 2^r$. Instead, if we were able to bound $\Vert \widehat{\ell_0} \Vert_{L_1} \le (1+\delta)^r$ for some $\delta <1$, this would immediately imply improve the lower bound required on $\mu$. If  $\delta>0$ could be made arbitrarily small, then we  obtain Theorem~\ref{thm:1} for all $\mu>0$.

Towards a better choice of $\ell$, we notice that while   $\ell_0(x)=0$ for $x \in C^{\downarrow} \setminus \{0\}$, it suffices to have $|\ell(x)| \le \eta =\epsilon/16$ for $x \in C^{\downarrow} \setminus \{0\}$.
Unfortunately, it is not clear how this relaxed requirement on $\ell$ can be exploited by the above analysis. To circumvent this, we consider a new family of functions $\{\mathsf{AND}_{\delta, z}\}_{z \in C^{\downarrow}}$ defined as follows.
$$
\textrm{For } x \in C^{\downarrow}, \ \mathsf{AND}_{\delta, z}(x) = \mathbf{1}_{x \succeq z} \cdot (1-\delta)^{|x|-|z|},
$$
$$
\textrm{ and }\widehat{\mathsf{AND}_{\delta, z}} \textrm{ is supported on }C^{\downarrow}.
$$
It is not difficult to see that the above conditions uniquely define $\mathsf{AND}_{\delta,z}$. For points in $C^{\downarrow}$, one can view $ \mathsf{AND}_{\delta, z}(\cdot)$ as a \emph{noise attenuated} version of the function $\mathsf{AND}_{z} (\cdot)$ (obtained by setting $\delta=0$).  We now set $\ell (x)= \sum_{z \in C^{\downarrow}} \alpha_z \cdot \mathsf{AND}_{\delta,z}(x)$. Obtaining the coefficients $\{\alpha_z\}_{z \in C^{\downarrow}}$ can be viewed as a sort of \emph{noise attenuated M\"{o}bius inversion}.
The flexibility afforded by the parameter $\delta$ allows us to exploit the relaxed constraints on $\ell$ and bound $\alpha_z $ by $\delta^{|z|} \cdot (1/\eta)^{O(\delta^{-1} \cdot \log \delta^{-1})}$.  To prove this, we combine basic properties of the M\"{o}bius transform on $C^{\downarrow}$ with the robust local inverse from Theorem~\ref{thm:moitra-saks}. Intuitively, since the function $\mathsf{AND}_{\delta,z}(\cdot)$ combines properties of $\mathsf{AND}_z$ with noise attenuation, it is not  surprising that the properties of M\"{o}bius transform and the robust local inverse are useful in bounding $\{\alpha_z\}_{z \in C^{\downarrow}}$. Further, we show that
\begin{eqnarray*}
\Vert \widehat{\ell} \Vert_{L_1} &\le&  k^2 \cdot (1+2\delta)^r \cdot (1/\eta)^{O(\delta^{-1} \cdot \log \delta^{-1})}.
\end{eqnarray*}
This proof of this inequality again uses the structure of $C^{\downarrow}$ as well as bounds on $\Vert \widehat{\mathsf{AND}_{\delta, z}} \Vert_{L_1}$ (this proof is given in Section~\ref{sec:4}). The above bound is incomparable to the  bound of $k \cdot 2^r$ we obtained for the first choice of $\ell = \sum_{z \in C^{\downarrow}} (-1)^{|z|} \cdot \mathsf{AND}_z$. In particular, we pay a dependence on $\eta$ to bound $\Vert \widehat{\ell} \Vert_{L_1}$ whereas the bound on $\Vert \widehat{\ell_0} \Vert_{L_1}$ had no dependence on $\eta$. However, the place where we make a significant gain is that base of the exponential factor in $r$ can be made arbitrarily close to $1$ by choosing a suitably small $\delta>0$.
We summarize the properties of $\ell$ in the next theorem.
~\\
\begin{theorem}\label{thm:MS1} Let $C \subseteq \{0,1\}^n$ be as defined above where $|C| \le k$ and $r = \max_{x  \in C} |x|$.
Given any $\delta, \eta >0$, there exists $\ell : \{0,1\}^n \rightarrow \mathbb{R}$ which is a linear combination of $\{\mathsf{AND}_{\delta, z}\}_{z \in C^{\downarrow}}$ such that
\begin{itemize}
\item $\ell(0) = 1$ and $|\ell(x)| \le \eta$ for $x \in C^{\downarrow} \setminus 0$.
\item $\Vert \widehat{\ell} \Vert_1 \le k^2 \cdot (1+2\delta)^{r}  \cdot (2/\eta)^{\delta^{-1} \cdot \log (2\delta^{-1})}$.
\item $\widehat{\ell}$ is supported on $C^{\downarrow}$ and hence $\max_{S : \widehat{\ell}(S) \not =0} |S| =r$.
\end{itemize}
Further, let $\ell(x)  = \sum_{S \in C^{\downarrow}} \ell_S \cdot \chi_S(x)$. Then, for every $S \in C^{\downarrow}$, $\ell_S$ can be computed in time $\mathsf{poly}(|C^{\downarrow}|, n)$.
\end{theorem}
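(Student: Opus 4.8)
The plan is to exhibit $\ell$ explicitly as a combination $\ell=\sum_{z\in C^{\downarrow}}\alpha_z\,\mathsf{AND}_{\delta,z}$ in which the coefficient $\alpha_z$ depends only on $|z|$ and is read off from the Moitra--Saks robust local inverse (Theorem~\ref{thm:moitra-saks}), and then to control $\Vert\widehat{\ell}\Vert_{L_1}$ by re-expanding each $\mathsf{AND}_{\delta,z}$ in the ordinary $\mathsf{AND}$-basis via M\"{o}bius inversion on the poset $C^{\downarrow}$ (Proposition~\ref{prop:mobius}). The first step I would carry out is to record the identity
$$
\mathsf{AND}_{\delta,z}=\sum_{w\in C^{\downarrow}:\,w\supseteq z}(-\delta)^{|w\setminus z|}\,\mathsf{AND}_w .
$$
This is exactly the M\"{o}bius-inversion statement: the right-hand side has Fourier support inside $C^{\downarrow}$ (each $\mathsf{AND}_w$ with $w\in C^{\downarrow}$ has Fourier support $w^{\downarrow}\subseteq C^{\downarrow}$), and on $C^{\downarrow}$ it takes the value $\sum_{z\subseteq w\subseteq x}(-\delta)^{|w\setminus z|}=\mathbf{1}_{z\subseteq x}(1-\delta)^{|x|-|z|}$ by the binomial theorem, so by the uniqueness of $\mathsf{AND}_{\delta,z}$ it must equal $\mathsf{AND}_{\delta,z}$; the coefficient $(-\delta)^{|w\setminus z|}$ is literally $\boldsymbol{\mu}^T$ applied to the prescribed value-vector.

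Next I would bound $\Vert\widehat{\mathsf{AND}_{\delta,z}}\Vert_{L_1}$. Since $\mathsf{AND}_w=\prod_{i\in w}x_i$ is a tensor product of functions of unit Fourier $L_1$ norm, $\Vert\widehat{\mathsf{AND}_w}\Vert_{L_1}=1$, so the identity above gives $\Vert\widehat{\mathsf{AND}_{\delta,z}}\Vert_{L_1}\le\sum_{w\in C^{\downarrow}:\,w\supseteq z}\delta^{|w\setminus z|}$. Covering $C^{\downarrow}$ by the at most $k$ cubes $x_i^{\downarrow}$ and summing $\delta^{|w\setminus z|}$ over each $\{w:z\subseteq w\subseteq x_i\}$ (a binomial sum equal to $(1+\delta)^{|x_i|-|z|}$) yields the clean bound $\Vert\widehat{\mathsf{AND}_{\delta,z}}\Vert_{L_1}\le k(1+\delta)^{\,r-|z|}$.

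Then I would fix the coefficients: apply Theorem~\ref{thm:moitra-saks} with parameters $\delta$, $r$, $\eta$ to obtain $h\in\mathbb{R}^{r+1}$ with $\Vert A_{\delta,r}h-e_0\Vert_\infty\le\eta$, zeroth coordinate of $A_{\delta,r}h$ equal to $1$, and $\Vert h\Vert_\infty\le(2/\eta)^{\delta^{-1}\log(2/\delta)}$, and set $\alpha_z:=\delta^{|z|}h(|z|)$. Because $\alpha_z$ depends only on $|z|$, for every $x\in C^{\downarrow}$ (so $|x|\le r$) the defining values of $\mathsf{AND}_{\delta,z}$ on $C^{\downarrow}$ give
$$
\ell(x)=\sum_{z\subseteq x}\alpha_z(1-\delta)^{|x|-|z|}=\sum_{j=0}^{|x|}\binom{|x|}{j}\delta^{j}(1-\delta)^{|x|-j}h(j)=(A_{\delta,r}h)(|x|),
$$
so $\ell(0)=h(0)=1$ and $|\ell(x)|\le\eta$ for $x\in C^{\downarrow}\setminus\{0\}$, which is item~1; item~3 is immediate since each $\mathsf{AND}_{\delta,z}$ is Fourier-supported on $C^{\downarrow}$. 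For item~2, I would combine $\Vert\widehat{\ell}\Vert_{L_1}\le\sum_{z\in C^{\downarrow}}|\alpha_z|\,\Vert\widehat{\mathsf{AND}_{\delta,z}}\Vert_{L_1}$ with the bound above to get
$$
\Vert\widehat{\ell}\Vert_{L_1}\le (2/\eta)^{\delta^{-1}\log(2/\delta)}\,k\,(1+\delta)^{r}\sum_{z\in C^{\downarrow}}\Big(\tfrac{\delta}{1+\delta}\Big)^{|z|},
$$
and bound the last sum, again via the cube cover, by $\sum_{i}\big(\tfrac{1+2\delta}{1+\delta}\big)^{|x_i|}\le k\big(\tfrac{1+2\delta}{1+\delta}\big)^{r}$, which collapses the estimate to $k^2(1+2\delta)^r(2/\eta)^{\delta^{-1}\log(2/\delta)}$. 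The algorithmic claim then follows: $h$ is computed in $\mathsf{poly}(n)$ time by Theorem~\ref{thm:moitra-saks}, the $\alpha_z$ are explicit, the identity from the first step gives the $\mathsf{AND}$-coefficients of $\ell$, and expanding each $\mathsf{AND}_w$ ($w\in C^{\downarrow}$) into characters produces $\{\ell_S\}_{S\in C^{\downarrow}}$ in time $\mathsf{poly}(|C^{\downarrow}|,n)$.

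I do not expect a serious obstacle once this structure is in place; the two load-bearing observations are (a) that taking $\alpha_z$ to be a function of $|z|$ alone makes the constraints of item~1 reduce \emph{exactly} to the matrix equation $A_{\delta,r}h\approx e_0$ solved by Moitra--Saks, and (b) the clean $\mathsf{AND}$-basis expansion of $\mathsf{AND}_{\delta,z}$ together with the cube-cover argument, which is what replaces the trivial $2^r$ bound on the Fourier $L_1$ norm by the $(1+2\delta)^r$ bound whose base can be pushed arbitrarily close to $1$. Everything else is routine binomial and geometric-series bookkeeping.
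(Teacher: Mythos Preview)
Your proposal is correct and follows essentially the same approach as the paper: the coefficients $\alpha_z=\delta^{|z|}h(|z|)$ with $h$ the Moitra--Saks robust local inverse, the reduction of item~1 to $(A_{\delta,r}h)(|x|)$, the bound $\Vert\widehat{\mathsf{AND}_{\delta,z}}\Vert_{L_1}\le k(1+\delta)^{r-|z|}$, and the final $k^2(1+2\delta)^r$ computation are all exactly as in the paper's Proposition~\ref{prop:noisyand} and proof of Theorem~\ref{thm:MS1}. The only cosmetic difference is that you write down the M\"{o}bius expansion $\mathsf{AND}_{\delta,z}=\sum_{w\supseteq z,\,w\in C^{\downarrow}}(-\delta)^{|w\setminus z|}\mathsf{AND}_w$ directly, whereas the paper first builds $\mathsf{AND}_{\delta,0}$ and then obtains general $z$ via the factorization $\mathsf{AND}_{\delta,z}=\mathsf{AND}_z\cdot\mathsf{AND}_{\delta,0,\mathcal{I}_z}$; these are the same formula.
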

We compare the above theorem with an analogous result in  Lovett and Zhang~\cite{lovett2015improved} who show the existence of $\ell_{\mathsf{LZ}} : \{0,1\}^n \rightarrow \mathbb{R}$  which satisfies
 \begin{itemize}
 \item $\ell_{\mathsf{LZ}} (0) = 1$ and $\ell_{\mathsf{LZ}} (x)=0$ for $x \in C \setminus 0$.
 \item $\Vert \widehat{\ell_{\mathsf{LZ}} } \Vert_{L_1} \le k \cdot k^{\log r}$ and $\max_{S: \widehat{\ell_{\mathsf{LZ}}}(S) \not = 0} |S| \le \log k$.
 \end{itemize}
 (This result is implied by Propersition 3.6 in their paper.)\\
 We now compare $\ell_{\mathsf{LZ}}$ with the function $\ell$ from Theorem~\ref{thm:MS1}
 \begin{itemize}
 \item $\ell_{\mathsf{LZ}} (x)=0$ for $x \in C \setminus 0$ whereas we achieve the incomparable guarantee of $|\ell(x)| \le \eta$ for $x \in C^{\downarrow} \setminus 0$.
  \item $\widehat{\ell_{\mathsf{LZ}}}$ is supported on a subset of $\mathcal{B}_{\log k}(0)$ whereas $\widehat{\ell} $ is supported on a subset of $\mathcal{B}_{r}(0)$. Thus, in terms of compactness of $\widehat{\ell}$, Lovett and Zhang achieve a superior guarantee.
  \item $\Vert \ell_{\mathsf{LZ}} \Vert_{L_1} \le k \cdot k^{\log r}$ whereas for any $\delta, \eta>0$, we achieve $\Vert \widehat{\ell} \Vert_1 \le k^2 \cdot (1+2\delta)^{r}  \cdot (2/\eta)^{\delta^{-1} \cdot \log \delta^{-1}}$.  Our bound has worse asymptotic dependence on $r$ (and a dependence on $\eta$). However, when the value of $\eta$ and $r$ are eventually plugged in (to $\epsilon/16$ and $r= O_{\mu} (\log (k/\epsilon))$ resp.), the bound on $\Vert \widehat{\ell} \Vert_1 $ remains $k^{O(1)}$ (for a fixed $\mu>0$) whereas the bound on $\Vert \ell_{\mathsf{LZ}} \Vert_{L_1}$ becomes $k^{O(\log \log k)}$. This is the crucial place where we gain over Lovett and Zhang~\cite{lovett2015improved}.
 \end{itemize}
 This concludes the proof overview.  We now give the proof of the main theorem.

\section{Proof of Theorem~\ref{thm:1}}\label{sec:maintheorem}
Recall that we are assuming that $\mathsf{supp}(f) = \{x_1, \ldots, x_k\}$ where $x_1=0$. Also, from our discussion in the preceding section, to prove Theorem~\ref{thm:1}, it suffices to show that $f(0)$ can be approximated to  $\epsilon$ with $(k/\epsilon)^{\tilde{O}(1/\mu)}$ samples and in time $\mathsf{poly}((k/\epsilon)^{\tilde{O}(1/\mu)},n)$. Let $E$ be the set defined in Lemma~\ref{lem:lovettlem} and let $g : \{0,1\}^n \rightarrow \mathbb{R}$ be defined as $g( x) = f(x) \cdot (T_{\mu} \cdot \mathbf{1}_E)(x)$.

 Let $r \ge \mu^{-2} \cdot \log k$ (whose precise value will be fixed later). Let $C = \mathsf{supp}(f) \cap \mathcal{B}_r(0)$. Let $\delta, \eta>0$ whose values will be fixed later and let $\ell: \{0,1\}^n \rightarrow \mathbb{R}$ be the function from Theorem~\ref{thm:MS1} corresponding to the parameters $C$, $r$, $\delta$ and $\eta$. As we have mentioned before, $f(0) \cdot (T_{\mu} \cdot \mathbf{1}_E)(0) \approx \langle \ell, g \rangle$. Thus, our algorithm to approximate $f(0)$ will be to approximate $\langle \ell, g \rangle$ (call the approximation $\widetilde{\langle \ell, g \rangle}$) and $(T_{\mu} \cdot \mathbf{1}_E)(0)$ (call the approximation $\Upsilon$) and return $\widetilde{\langle \ell, g \rangle}/ \Upsilon$.

We first bound the difference between $\langle \ell, g \rangle$ and $(T_{\mu} \cdot \mathbf{1}_E)(0) \cdot f(0)$ in terms of $r$, $k$, $\delta$ and $\eta$.
\begin{claim}\label{clm:errorbound}
$$
\big| \langle \ell, g \rangle - f(0) \cdot (T_{\mu} \cdot \mathbf{1}_E)(0) \big|  \le \eta + \Vert \widehat{\ell} \Vert_{L_1}  \cdot e^{-\frac{\mu^2 \cdot r}{2}}.
$$

\end{claim}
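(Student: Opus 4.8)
The plan is to expand $\langle \ell, g \rangle$ over the support of $f$ and split the sum according to distance from the origin. Recall $g(x) = f(x) \cdot (T_{\mu}\mathbf{1}_E)(x)$, so
$$
\langle \ell, g \rangle = \sum_{x \in \mathsf{supp}(f)} \ell(x) \cdot f(x) \cdot (T_{\mu}\mathbf{1}_E)(x).
$$
The term $x = 0$ contributes exactly $\ell(0)\cdot f(0)\cdot (T_{\mu}\mathbf{1}_E)(0) = f(0)\cdot (T_{\mu}\mathbf{1}_E)(0)$, since $\ell(0)=1$ by Theorem~\ref{thm:MS1}. So the quantity to bound is $\big|\sum_{x \in \mathsf{supp}(f),\, x \neq 0} \ell(x) f(x) (T_{\mu}\mathbf{1}_E)(x)\big|$, and I would break the nonzero $x$'s into those inside $\mathcal{B}_r(0)$ and those outside.

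First I would handle the inside part: for $x \in \mathsf{supp}(f)\cap \mathcal{B}_r(0) = C$ with $x \neq 0$, we have $x \in C^{\downarrow}\setminus\{0\}$, so $|\ell(x)| \le \eta$ by Theorem~\ref{thm:MS1}. Using $0 \le (T_{\mu}\mathbf{1}_E)(x) \le 1$ (it is an average of $\mathbf{1}_E$ values) and $f \ge 0$ with $\sum_x f(x) \le 1$, the total contribution of these terms is at most $\eta \sum_{x} f(x) \le \eta$. Next, for the outside part, $x \in \mathsf{supp}(f)$ with $|x| > r \ge \mu^{-2}\log k$: such $x$ lie in $\mathsf{Far}$ (since $d_H(x_1,x)=|x| > \mu^{-2}\log k$), so Lemma~\ref{lem:lovettlem} gives $(T_{\mu}\mathbf{1}_E)(x) \le e^{-\frac{1}{2}\mu^2 |x|} \le e^{-\frac{\mu^2 r}{2}}$. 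Bounding $|\ell(x)| \le \Vert\widehat{\ell}\Vert_{L_1}$ pointwise (since $|\ell(x)| = |\sum_S \ell_S \chi_S(x)| \le \sum_S |\ell_S| = \Vert\widehat{\ell}\Vert_{L_1}$) and again using $f \ge 0$, $\sum_x f(x)\le 1$, the contribution of these terms is at most $\Vert\widehat{\ell}\Vert_{L_1}\cdot e^{-\frac{\mu^2 r}{2}}$. Combining the two bounds via the triangle inequality yields the claim.

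There is really no single hard step here — it is a routine decompose-and-triangle-inequality argument — but the point that needs the most care is checking that each $x$ with $|x| > r$ genuinely lands in $\mathsf{Far}$ so that the exponential decay bound of Lemma~\ref{lem:lovettlem} applies, and that the pointwise bounds $0 \le (T_{\mu}\mathbf{1}_E)(x) \le 1$ and $|\ell(x)| \le \Vert\widehat{\ell}\Vert_{L_1}$ are being used with the correct $\ell^1/\ell^\infty$ pairing against $f$ (which is a probability distribution, so $\Vert f\Vert_1 \le 1$ and $f \ge 0$). One subtlety worth a sentence: strictly we should also note $\ell(0) = 1$ and the $x=0$ term is isolated exactly, which is where the hypothesis $\ell(0)=1$ from Theorem~\ref{thm:MS1} enters. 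I expect the whole proof to be four or five lines.
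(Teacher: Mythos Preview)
Your proposal is correct and follows essentially the same argument as the paper: split the sum over $\mathsf{supp}(f)\setminus\{0\}$ into $C\setminus\{0\}$ and its complement, use $|\ell|\le\eta$ on the former (via Theorem~\ref{thm:MS1}) together with $\|T_\mu\mathbf{1}_E\|_\infty\le 1$ and $\|f\|_1=1$, and use $|\ell|\le\|\widehat{\ell}\|_{L_1}$ with the exponential decay from Lemma~\ref{lem:lovettlem} on the latter. Your check that $|x|>r\ge \mu^{-2}\log k$ forces $x\in\mathsf{Far}$ is exactly the point the paper leaves implicit.
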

\begin{proof}
Using $\ell(0)=1$ and the definition of $g$,
\begin{eqnarray}
\big| \langle \ell, g \rangle - f(0) \cdot (T_{\mu} \cdot \mathbf{1}_E)(0) \big| &=& \big| \langle \ell, g \rangle - \ell(0) \cdot g(0)\big|  \nonumber \\
&\leq& \sum_{x \in C^{} \setminus 0} \big| \ell(x) \cdot g(x) \big| + \sum_{x \not \in C} \big| \ell(x) \cdot g(x) \big|. \label{eq:fin1}
\end{eqnarray}
Next, we bound the first sum.
\begin{eqnarray}
\sum_{x \in C^{} \setminus 0} \big| \ell(x) \cdot g(x) \big|  &\le& \sup_{x \in C^{} \setminus 0} |\ell(x)| \cdot \sum_{x \in C^{} \setminus 0} \big| g(x) \big| \nonumber \\
&\le& \eta \cdot  \sum_{x \in C^{}} |g(x)|   = \eta \cdot  \sum_{x \in C^{}} |f(x) \cdot (T_{\mu} \mathbf{1}_E)(x)| \nonumber \\
&\le& \eta \cdot  \sum_{x \in C^{}} |f(x) | \le \eta.  \label{eq:fin2}
\end{eqnarray}
In the above, the second inequality  follows  from the property of $\ell$ from Theorem~\ref{thm:MS1}, the third inequality uses $\Vert T_{\mu} \mathbf{1}_E  \Vert_{\infty} \le 1$ and the last inequality uses $\Vert f \Vert_1 =1$.  Next, we bound the second sum.
\begin{eqnarray}
\sum_{x  \not \in C} \big| \ell(x) \cdot g(x) \big|  &\le& \sup_{x } |\ell(x)| \cdot \sum_{x  \not \in C^{} } \big| g(x) \big| \nonumber \\
&\le& \Vert \widehat{\ell} \Vert_{L_1} \cdot  \sum_{x \not \in C^{}} |g(x)|  = \Vert \widehat{\ell} \Vert_{L_1}  \cdot \sum_{x \in C^{}} |f(x) \cdot (T_{\mu} \mathbf{1}_E)(x)| \nonumber \\
&\le&  \Vert \widehat{\ell} \Vert_{L_1}  \cdot  \sum_{x \in C^{}} |f(x) | \cdot e^{-\frac{\mu^2 \cdot r}{2}} \le  \Vert \widehat{\ell} \Vert_{L_1}  \cdot e^{-\frac{\mu^2 \cdot r}{2}}\label{eq:fin3}
\end{eqnarray}
The second inequality uses that for all $x$, $|\ell(x)| \le \Vert \widehat{\ell} \Vert_{L_1}$, the third inequality uses Lemma~\ref{lem:lovettlem} whereas the last inequality uses $\Vert f \Vert_1 = 1$. Plugging (\ref{eq:fin2}) and (\ref{eq:fin3}) in (\ref{eq:fin1}), we obtain the claim. \end{proof}
~\\
Note that $\Vert \widehat{\ell} \Vert_1 \le k^2 \cdot (1+2\delta)^r \cdot (2/\eta)^{\delta^{-1} \cdot \log (2 \delta^{-1})}$. If we set,
\begin{itemize}
\item $\eta = \epsilon/4$,
\item $\delta = \mu^2/16$,
\item $r = (100/\mu^4) \cdot \log (1/\mu) \cdot \log (k/\epsilon)$,
\end{itemize}
then using Claim~\ref{clm:errorbound}, we have $\big| \langle \ell, g \rangle - f(0) \cdot (T_{\mu} \cdot \mathbf{1}_E)(0) \big| \le \epsilon/8$.

Let  $\ell(x) = \sum_{S \in C^{\downarrow}} \ell_S \cdot \chi_S(x)$. Using Theorem~\ref{thm:MS1}, we can assume that we have the complete list $\{\ell_S\}_{S \in C^{\downarrow}}$ in time $\mathsf{poly}(n, |C^{\downarrow}|) = \mathsf{poly}(n, k \cdot 2^r) = \mathsf{poly}((k/\epsilon)^{O_{\mu}(1)}, n)$. Applying Corollary~\ref{corr:att-compute}, using $(k/\epsilon)^{O_{\mu}(1)} \cdot \log (1/\kappa)$ and time $n \cdot (k/\epsilon)^{O_{\mu}(1)} \cdot \log (1/\kappa)$, with confidence $1-\kappa$, we can obtain $\widetilde{\langle \ell, g \rangle }$ such that
$
\big| \widetilde{\langle \ell, g \rangle } - {\langle \ell, g \rangle }\big|  \le \frac{\epsilon}{16}.
$ This implies that
\begin{eqnarray}
\big| \widetilde{\langle \ell, g \rangle } - f(0) \cdot (T_{\mu} \cdot \mathbf{1}_E)(0) \big|  &\le& \big| \widetilde{\langle \ell, g \rangle }  - \langle \ell, g \rangle\big| + \big|  \langle \ell, g \rangle-f(0) \cdot (T_{\mu} \cdot \mathbf{1}_E)(0) \big| \nonumber\\
&\le& \epsilon/16 + \epsilon/16 = \epsilon/8. \nonumber
\end{eqnarray}
Using Lemma~\ref{lem:lovettlem}, we can compute an approximation $\Upsilon$ with confidence $1-\kappa$ such that
$|(T_{\mu} \cdot \mathbf{1}_E)(0) - \Upsilon| \le \epsilon/8$ and $\Upsilon \ge 1/2$ in time $\mathsf{poly}(n,k,1/\epsilon) \cdot \log(1/\kappa)$. Thus,
$$
\bigg| \frac{\widetilde{\langle \ell, g \rangle }}{\Upsilon} - f(0) \cdot \frac{(T_{\mu} \cdot \mathbf{1}_E)(0)}{\Upsilon} \bigg|  \le \frac{\epsilon}{8 \cdot \Upsilon} \le \frac{\epsilon}{4},
$$
where the last inequality uses $\Upsilon \ge 1/2$. Further, with probability $1-\kappa$, we have
$$
\bigg| \frac{(T_{\mu} \cdot \mathbf{1}_E)(0)}{\Upsilon}  - 1 \bigg| \le \epsilon/4.
$$
Thus, with probability $1-2\kappa$,
\begin{eqnarray*}
\bigg| \frac{\widetilde{\langle \ell, g \rangle }}{\Upsilon} - f(0) \bigg| &\le& \bigg| \frac{\widetilde{\langle \ell, g \rangle }}{\Upsilon} - f(0) \cdot \frac{(T_{\mu} \cdot \mathbf{1}_E)(0)}{\Upsilon} \bigg| + \bigg| f(0) - f(0) \cdot \frac{(T_{\mu} \cdot \mathbf{1}_E)(0)}{\Upsilon} \bigg| \\
&\leq& \frac{\epsilon}{4} + f(0) \cdot  \bigg| 1 -   \frac{(T_{\mu} \cdot \mathbf{1}_E)(0)}{\Upsilon} \bigg| \le \frac{\epsilon}{4} + f(0) \cdot \frac{\epsilon}{4} \le \frac{\epsilon}{2}.
\end{eqnarray*}
This concludes the proof of Theorem~\ref{thm:1}.

 \section{Proof of Claim~\ref{clm:att-compute}}\label{sec:att-compute}
 We begin by restating Claim~\ref{clm:att-compute}.
  \begin{claim*}
 Let $g : \{0,1\}^n \rightarrow \mathbb{R}$ be defined as $g(x) = f(x) \cdot (T_{\mu} \mathbf{1}_{E}) (x)$. For $S \subset [n]$, $\widehat{g}(S)$ can be computed to additive accuracy $\epsilon$ with probability $1-\kappa$ using $\gamma(\mu, |S|, \epsilon) \cdot \log (1/\kappa)$ samples from $T_{\mu} f$ in time
$n \cdot \gamma(\mu, |S|, \epsilon) \cdot \log (1/\delta)$ where
$\gamma(\mu, |S|, \epsilon) = \mathsf{poly}((1/\mu)^{|S|}, 1/\epsilon)$. Here, we assume that $\mathbf{1}_E(\cdot)$ can be efficiently computed.
 \end{claim*}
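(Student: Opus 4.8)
Following the idea of Lovett and Zhang~\cite{lovett2015improved}, the plan is to write $\widehat g(S)$ as the expectation of a single function $h$ over one noisy sample $y\sim T_\mu f$, where $h$ will be (i) bounded by $(1/\mu)^{|S|}$ in sup-norm and (ii) evaluable at any point with only $2^{O(|S|)}$ calls to $\mathbf{1}_E$; an empirical average of $h$ over $O((1/\mu)^{2|S|}\epsilon^{-2}\log(1/\kappa))$ independent samples, together with Hoeffding's inequality, then yields the claimed sample and time bounds. The natural candidate is
\[
h \;=\; T_{1/\mu}\big[(T_\mu \mathbf{1}_E)\cdot \chi_S\big],
\]
where $T_{1/\mu}$ is the same tensored operator as $T_\mu$ but with parameter $1/\mu>1$, so that $T_\mu T_{1/\mu}=I$ (both act diagonally in the character basis, scaling the coefficient of $\chi_V$ by $\mu^{|V|}$ and $\mu^{-|V|}$ respectively). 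Since $T_\mu$ is self-adjoint for the inner product $\langle a,b\rangle=\sum_x a(x)b(x)$ and $T_\mu f$ is a genuine probability distribution, one gets for free that
\[
\mathbb{E}_{y\sim T_\mu f}[h(y)] \;=\; \langle T_\mu f,\,h\rangle \;=\; \langle f,\,T_\mu h\rangle \;=\; \langle f,\,(T_\mu\mathbf{1}_E)\chi_S\rangle \;=\; \sum_x g(x)\chi_S(x) \;=\; \widehat g(S).
\]

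The real work is to show that $h$ is tame, and this is where the special structure of $g$ enters. A priori $h=T_{1/\mu}\psi$ is a sum of $2^n$ terms and could have sup-norm as large as $\mu^{-n}$; what saves us is that $\psi=(T_\mu\mathbf{1}_E)\chi_S$ already contains a full application of $T_\mu$ on the coordinates outside $S$ and a character supported on $S$. Write $\bar S=[n]\setminus S$, $T_\mu=T_{\mu,S}T_{\mu,\bar S}$, $T_{1/\mu}=T_{1/\mu,S}T_{1/\mu,\bar S}$. Since $\chi_S$ is constant on the $\bar S$-coordinates, multiplication by $\chi_S$ commutes with $T_{1/\mu,\bar S}$; since $S,\bar S$ are disjoint, $T_{1/\mu,\bar S}$ commutes with $T_{\mu,S}$; and $T_{1/\mu,\bar S}T_{\mu,\bar S}=I$. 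Pushing $T_{1/\mu,\bar S}$ through collapses $h$ to
\[
h \;=\; T_{1/\mu,S}\big[\,\chi_S\cdot (T_{\mu,S}\mathbf{1}_E)\,\big].
\]
Now $T_{\mu,S}\mathbf{1}_E$ is a convex combination of $\{0,1\}$-values, hence lies in $[0,1]$, so $\|\chi_S\cdot(T_{\mu,S}\mathbf{1}_E)\|_\infty\le1$; and $\|T_{1/\mu,i}\|_{\infty\to\infty}=\tfrac{1+1/\mu}{2}+\tfrac{1/\mu-1}{2}=1/\mu$ for each $i\in S$, so tensoring gives $\|T_{1/\mu,S}\|_{\infty\to\infty}=(1/\mu)^{|S|}$ and hence $\|h\|_\infty\le(1/\mu)^{|S|}$. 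Expanding $T_{1/\mu,S}$ and $T_{\mu,S}$ into their $2^{|S|}$-term sums over coordinate flips inside $S$ exhibits $h(y)$ as a fixed linear combination---with coefficients summing in absolute value to $(1/\mu)^{|S|}$---of the $4^{|S|}$ quantities $\chi_S(y^{T})\,\mathbf{1}_E(y^{T\oplus T'})$ over $T,T'\subseteq S$; since $\chi_S(y^T)=(-1)^{|T|}\chi_S(y)$, computing $h(y)$ takes $O(n)+2^{O(|S|)}$ arithmetic plus $2^{O(|S|)}$ calls to $\mathbf{1}_E$.

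It then remains to assemble the estimator: draw $m=\gamma(\mu,|S|,\epsilon)\log(1/\kappa)$ i.i.d. samples $y_1,\dots,y_m\sim T_\mu f$ with $\gamma(\mu,|S|,\epsilon)=O\big((1/\mu)^{2|S|}\epsilon^{-2}\big)=\mathsf{poly}((1/\mu)^{|S|},1/\epsilon)$, and output $\frac1m\sum_i h(y_i)$. As each $h(y_i)\in[-(1/\mu)^{|S|},(1/\mu)^{|S|}]$, Hoeffding's inequality gives additive error $\le\epsilon$ with probability $\ge1-\kappa$, and the running time is $m$ times the cost of one evaluation of $h$, i.e. $n\cdot\gamma(\mu,|S|,\epsilon)\cdot\log(1/\kappa)$ up to the $2^{O(|S|)}$ per-evaluation overhead for $\mathbf{1}_E$ (which is $(k/\epsilon)^{O_\mu(1)}$ in the regime $|S|=O_\mu(\log(k/\epsilon))$ where this claim is applied). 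The only genuinely delicate step is the collapse $h=T_{1/\mu,S}[\chi_S\cdot T_{\mu,S}\mathbf{1}_E]$; everything else is Fourier/self-adjointness bookkeeping and a Chernoff bound.
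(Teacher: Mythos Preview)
Your proof is correct and is essentially the same as the paper's. The paper writes the per-sample estimator as $\langle T_{\mu,S} X_S T_{\mu,S}^{-1}\mathbf{1}_z,\mathbf{1}_E\rangle$ and bounds it via $\|T_{\mu,S}^{-1}\|_{1\to 1}\le(1/\mu)^{|S|}$; moving the self-adjoint operators across the inner product shows this quantity equals your $h(z)=T_{1/\mu,S}[\chi_S\cdot T_{\mu,S}\mathbf{1}_E](z)$, and your $\|T_{1/\mu,S}\|_{\infty\to\infty}\le(1/\mu)^{|S|}$ is the dual of their bound, so the two arguments coincide up to which side of the pairing one pushes the operators to.
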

 Since $g (x) = f(x) \cdot (T_{\mu} \mathbf{1}_{E}) (x)$, we get
  that
\begin{eqnarray*}
\widehat{g}(S) = \langle (X_S  f) , (T_{\mu} \mathbf{1}_{E}) \rangle = \langle (T_\mu X_S f) , \mathbf{1}_{E} \rangle.
\end{eqnarray*}
 We now make two observations. The first is that for any $S \subseteq [n]$, $T_{\mu, S}$ is a self-adjoint operator. The second is that if $S, S' \subseteq [n]$ are disjoint sets, then
 the operators $X_{S'}$ and $T_{\mu, S}$ commute. Decomposing $T_{\mu} = T_{\mu, S} T_{\mu, \overline{S}}$, we have
 $$
T_{\mu} X_S f = T_{\mu, S} T_{\mu, \overline{S}} X_S f = T_{\mu, S}  X_ST_{\mu, \overline{S}} f  =  T_{\mu, S}  X_ST_{\mu,S}^{-1} T_{\mu} f.
$$
 Thus, we get
 \begin{eqnarray*}
\widehat{g}(S)  = \langle  T_{\mu, S}  X_ST_{\mu,S}^{-1} T_{\mu} f , \mathbf{1}_E \rangle = \mathbf{E}_{z \sim T_{\mu} f} \langle T_{\mu, S}  X_ST_{\mu,S}^{-1} \mathbf{1}_z , \mathbf{1}_E \rangle
\end{eqnarray*}
An easy but crucial fact is the following.
\begin{proposition}
$\langle T_{\mu, S}  X_ST_{\mu,S}^{-1} \mathbf{1}_z , \mathbf{1}_E \rangle$ can be computed in time $\mathsf{poly}(n, 2^{|S|})$.
\end{proposition}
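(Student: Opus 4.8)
The plan is to observe that the three operators $T_{\mu,S}$, $X_S$ and $T_{\mu,S}^{-1}$ all act only on the coordinates in $S$, so that $T_{\mu,S}\,X_S\,T_{\mu,S}^{-1}\mathbf{1}_z$ is supported on the $2^{|S|}$ points of $\{0,1\}^n$ that agree with $z$ on $\overline S$, and the pairing with $\mathbf{1}_E$ collapses to a sum of $2^{|S|}$ terms, each of which can be produced by a bounded-size matrix computation and one evaluation of $\mathbf{1}_E$.

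Concretely, I would identify $\mathcal{F}$ with $\mathcal{F}_S \otimes \mathcal{F}_{\overline S}$, where $\mathcal{F}_S$ is the space of real functions on $\{0,1\}^S$ and $\mathcal{F}_{\overline S}$ that on $\{0,1\}^{\overline S}$, and write $z=(z_S,z_{\overline S})$, so $\mathbf{1}_z = \mathbf{1}_{z_S}\otimes \mathbf{1}_{z_{\overline S}}$. From the definitions, $T_{\mu,S} = \big(\bigotimes_{i\in S}T_{\mu,i}\big)\otimes I_{\overline S}$, while $X_S = D_S\otimes I_{\overline S}$ with $D_S$ the diagonal operator on $\mathcal{F}_S$ given by $D_S(w,w)=\prod_{i\in S}(-1)^{w_i}$. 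The single-coordinate operator $T_{\mu,1}$ has determinant $\mu>0$, hence is invertible, and $T_{\mu,1}^{-1}$ has entries of magnitude $O(1/\mu)$; therefore $T_{\mu,S}^{-1} = \big(\bigotimes_{i\in S}T_{\mu,i}^{-1}\big)\otimes I_{\overline S}$ is well-defined, with entries of magnitude $(O(1/\mu))^{|S|}$. Consequently
\[
T_{\mu,S}\,X_S\,T_{\mu,S}^{-1}\,\mathbf{1}_z \;=\; \big(T_{\mu,S}\,X_S\,T_{\mu,S}^{-1}\,\mathbf{1}_{z_S}\big)\otimes \mathbf{1}_{z_{\overline S}} \;=:\; h\otimes \mathbf{1}_{z_{\overline S}},
\]
where $h$ is a function on the $2^{|S|}$-point cube $\{0,1\}^S$, and hence
\[
\big\langle T_{\mu,S}\,X_S\,T_{\mu,S}^{-1}\,\mathbf{1}_z,\ \mathbf{1}_E\big\rangle \;=\; \sum_{w\in\{0,1\}^S} h(w)\cdot \mathbf{1}_E\big((w,z_{\overline S})\big).
\]

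To evaluate this, I would: (i) form $\mathbf{1}_{z_S}\in\mathbb{R}^{2^{|S|}}$; (ii) compute $h = T_{\mu,S}\,X_S\,T_{\mu,S}^{-1}\mathbf{1}_{z_S}$ by three successive matrix--vector multiplications, each against a $2^{|S|}\times 2^{|S|}$ matrix (or, using the tensor structure, $|S|$ sweeps over the $2\times 2$ factors), at cost $\mathsf{poly}(2^{|S|})$; (iii) for each $w\in\{0,1\}^S$, form the string $(w,z_{\overline S})\in\{0,1\}^n$ and evaluate $\mathbf{1}_E$ there, which by the standing assumption costs $\mathsf{poly}(n)$; and (iv) add up the $2^{|S|}$ products $h(w)\,\mathbf{1}_E((w,z_{\overline S}))$. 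The total running time is $\mathsf{poly}(n,2^{|S|})$.

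I do not expect a genuine obstacle here; the only two points needing a word of care are that $T_{\mu,1}$ is really invertible with an inverse of bounded entrywise norm (which uses $\mu>0$ and guarantees that $h$ is well-defined with polynomial bit-complexity), and the localization identity $T_{\mu,S}\,X_S\,T_{\mu,S}^{-1}\,\mathbf{1}_z = h\otimes \mathbf{1}_{z_{\overline S}}$ — both of which follow immediately from the tensor-product descriptions of the operators above.
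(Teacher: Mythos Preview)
Your proposal is correct and follows essentially the same approach as the paper: both arguments rest on the observation that $T_{\mu,S}X_ST_{\mu,S}^{-1}\mathbf{1}_z$ is supported on the $2^{|S|}$ strings agreeing with $z$ outside $S$, so the inner product reduces to a $2^{|S|}$-term sum with one $\mathbf{1}_E$ evaluation per term. Your tensor-product formulation makes the localization and the computation of $h$ more explicit than the paper's terse version, but the underlying idea is identical.
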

\begin{proof}
To see this, define
$A_{z,S} = \{y: y_{\overline{S}} = z_{\overline{S}}\}$. Observe that
$$
\mathsf{supp} (T_{\mu, S}  X_ST_{\mu,S}^{-1} \mathbf{1}_z) \subseteq A_{z,S} \  \textrm{and} \ |A_{z,S}| = 2^{|S|}.
$$
Further, $T_{\mu, S}  X_ST_{\mu,S}^{-1} \mathbf{1}_z$ can be computed  on any point in $A_{z,S}$ in time $2^{O(|S|)}$.
Using the fact that $\mathbf{1}_E(\cdot)$ can be efficiently evaluated, we conclude that  $\langle T_{\mu, S}  X_ST_{\mu,S}^{-1} \mathbf{1}_z , \mathbf{1}_E \rangle$ can be evaluated in time $\mathsf{poly}(n, 2^{|S|})$.
\end{proof}

Based on the above relation, our procedure to estimate $\widehat{g}(S)$ will be a simple random sampling procedure. Let $M$ be a sufficiently large number (which will be fixed soon).
\begin{itemize}
\item Sample $z_1, \ldots, z_M \sim T_{\mu} f$.
\item Return  $\widetilde{g}_{S}=M^{-1} \cdot \big(\sum_{i=1}^M \langle T_{\mu, S}  X_ST_{\mu,S}^{-1} \mathbf{1}_z , \mathbf{1}_E \rangle \big)$.
\end{itemize}

To establish an upper bound on $M$, we recall the following facts from \cite{lovett2015improved} (Claim~3.5 in \cite{lovett2015improved}).
\begin{claim}
$\Vert T_{\mu,i}\Vert_{1\to 1} =1$ and $\Vert T^{-1}_{\mu,i}\Vert_{1\to 1} =1/\mu.$
\end{claim}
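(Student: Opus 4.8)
The plan is to exploit that $T_{\mu,i}$ acts only on coordinate $i$, hence is block diagonal with respect to the partition of $\bn$ into the $2^{n-1}$ pairs $\{x,x^i\}$. On each such pair, the definition $(T_{\mu,i}f)(x)=\tfrac{1+\mu}{2}f(x)+\tfrac{1-\mu}{2}f(x^i)$ shows that $T_{\mu,i}$ acts by the single matrix
$$P_\mu=\begin{pmatrix}\tfrac{1+\mu}{2}&\tfrac{1-\mu}{2}\\ \tfrac{1-\mu}{2}&\tfrac{1+\mu}{2}\end{pmatrix},$$
and $T^{-1}_{\mu,i}$, which respects the same partition, acts by $P_\mu^{-1}$ on each block. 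Because $\Vert\cdot\Vert_1$ on $\mathcal{F}$ is additive over these pairs — and the same holds for $\Vert T_{\mu,i}f\Vert_1$ and $\Vert T^{-1}_{\mu,i}f\Vert_1$ — we get $\Vert T_{\mu,i}\Vert_{1\to1}=\Vert P_\mu\Vert_{1\to1}$ and $\Vert T^{-1}_{\mu,i}\Vert_{1\to1}=\Vert P_\mu^{-1}\Vert_{1\to1}$, where ``$\le$'' is the triangle inequality summed over blocks and ``$\ge$'' is obtained by testing on a function supported on a single pair. It then remains to recall the standard fact that the induced $\ell_1\to\ell_1$ norm of a matrix $A$ equals its largest absolute column sum $\max_j\sum_i|A_{ij}|$.

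For $T_{\mu,i}$, both columns of $P_\mu$ sum to $\tfrac{1+\mu}{2}+\tfrac{1-\mu}{2}=1$, so $\Vert P_\mu\Vert_{1\to1}=1$ and hence $\Vert T_{\mu,i}\Vert_{1\to1}=1$. For the inverse, $\det P_\mu=\big(\tfrac{1+\mu}{2}\big)^2-\big(\tfrac{1-\mu}{2}\big)^2=\mu$, so
$$P_\mu^{-1}=\frac1\mu\begin{pmatrix}\tfrac{1+\mu}{2}&-\tfrac{1-\mu}{2}\\ -\tfrac{1-\mu}{2}&\tfrac{1+\mu}{2}\end{pmatrix},$$
and each column has absolute-value sum $\tfrac1\mu\big(\tfrac{1+\mu}{2}+\tfrac{1-\mu}{2}\big)=\tfrac1\mu$; hence $\Vert P_\mu^{-1}\Vert_{1\to1}=1/\mu$ and $\Vert T^{-1}_{\mu,i}\Vert_{1\to1}=1/\mu$. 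Equivalently, one reads off $(T^{-1}_{\mu,i}f)(x)=\tfrac1\mu\big(\tfrac{1+\mu}{2}f(x)-\tfrac{1-\mu}{2}f(x^i)\big)$, bounds its $\ell_1$ norm by the triangle inequality over all $x$, and checks that the bound $1/\mu$ is attained on any point mass $\mathbf{1}_z$ (equality forced by the non-cancellation of $|f(x)|$ and $|f(x^i)|$ when $f$ lives on one pair).

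There is essentially no obstacle here; the one step deserving a sentence of care is the reduction to the $2\times2$ block — namely that $T_{\mu,i}$ and $T^{-1}_{\mu,i}$ both preserve the direct-sum decomposition of $\mathcal{F}$ indexed by the pairs $\{x,x^i\}$, and that $\Vert\cdot\Vert_1$ is additive across that decomposition, so that each operator norm equals the common $\ell_1\to\ell_1$ norm of its $2\times2$ block.
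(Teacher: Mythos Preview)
Your proof is correct. The paper itself does not prove this claim but merely quotes it from Lovett and Zhang~\cite{lovett2015improved} (their Claim~3.5), so there is no proof in the paper to compare against. Your block-diagonal reduction to the $2\times 2$ matrix $P_\mu$ on each pair $\{x,x^i\}$, together with the standard identification of the $\ell_1\!\to\!\ell_1$ operator norm as the maximum absolute column sum, is the natural argument and is essentially what one finds in \cite{lovett2015improved}; the computation of $P_\mu^{-1}$ and its column sums is carried out correctly.
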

~\\ The above immediately implies
\begin{equation}\label{eq:hyper}
\Vert T_{\mu,S}\Vert_{1\to 1}\leq 1,\quad \quad \Vert T^{-1}_{\mu,S} \Vert_{1\to 1}\leq (1/\mu)^{|S|}.
\end{equation}
Using $\Vert X_S \Vert_{1 \rightarrow 1} \le 1$, this implies that $ \Vert T_{\mu, S}  X_ST_{\mu,S}^{-1} \mathbf{1}_z \Vert_1 \le (1/\mu)^{|S|}$.
%
%
%
%
%
%
$$
\langle T_{\mu, S}  X_ST_{\mu,S}^{-1} \mathbf{1}_z , \mathbf{1}_E \rangle \le \Vert  T_{\mu, S}  X_ST_{\mu,S}^{-1} \mathbf{1}_z \Vert_1 \le  (1/\mu)^{|S|}.
$$
An application of Chernoff bound yields that if $M = \mathsf{poly}(1/\epsilon, 1/|\mu|^{|S|}) \cdot \log (1/\kappa)$, then with probability $1-\kappa$, $|\widetilde{g}(S) - \widehat{g}(S)| \le \epsilon$.
\ignore{
\section{Proof of Theorem~\ref{thm:2}}\label{sec:main proof}

Theorem~\ref{thm:2} will be proved from Lemma~\ref{le:1} and
Theorem~\ref{le:4} stated in the previous section, and Lemma~\ref{le:2}
which we now state.  This lemma says that given a function $f$, for a carefully defined
boolean function $a(x)$, the function $g$ given by \rnote{$f(x) T_{\mu}(a)(x)$}
has certain desirable properties.  To define $a(x)$ we need some definitions:

Recall that for $p,q \in (0,1)$ the KL-divergence of $p$ and $q$ is given by:

\[
D(p||q) = p \ln \frac{p}{q} + (1-p) \ln \frac{1-p}{1-q}.
\]

For $\mu > \mu_1 > 0$, define:

\begin{eqnarray*}
D_1=D_1(\mu,\mu_1) & = & D\bigg(\frac{1+\mu_1}{2}\bigg|\bigg|\frac{1-\mu}{2}\bigg)\\
D_2=D_2(\mu,\mu_1) & = & D\bigg(\frac{1-\mu_1}{2}\bigg|\bigg|\frac{1-\mu}{2}\bigg)\\
C_2=C_2(\mu,\mu_1) & = & \frac{2}{D_2}
\end{eqnarray*}

Let $r_a=r_a(\mu,\mu_1)=C_2(\mu,\mu_1)\ln(k)$ and classify
each $z \in \bitsn$ as $(\mu,\mu_1)$-{\em near} if $|z| \leq r_a$ (here $|z|$ is the number of 1's in $z$) and
as $(\mu,\mu_1)$-{\em far} otherwise.  We write simply {\em near} and {\em far}
if $\mu,\mu_1$ are clear from context.  Write $near(f)$ and $far(f)$ for the
set of points in $\supp(f)$ that are near and far, respectively.

For $z \in \bitsn$ and $\mu_1 \in [0,1]$ define
the Boolean function $a_z=a_{z,\mu_1}$ by $a_z(x)=1$ if and only if $|x \wedge z| \leq \frac{1-\mu_1}{2} |z|$,
where $|y|$ is the number of 1's in $y$ and $x\wedge z$ is the bitwise AND.  Thus $a_z(x)=1$
means that at most $(1-\mu_1)/2$ of the bits indexed by $\supp(z)$ are 1.  Define
the Boolean function $a=a_{\mu,\mu_1}$ by:

\[
a=\rnote{\bigwedge_{z \in far(f)} }a_{z,\mu_1}.
\]

\begin{lemma}\label{le:2}
Let $f:\bitsn \longrightarrow \mathbb{R}$, with $|\supp(f)| \leq k$.
Let $\mu>\mu_1>0$, let $a=a_{\mu,\mu_1}$ and let $g=g_{\mu,\mu_1}$
be defined by $g=f \cdot (T_{\mu} a)$.
Then:
\begin{enumerate}
\item $|g(0)|\geq |f(0)|/2$.
\item For all $z \in far(f)$,
$|g(z)|\leq|f(z)|\cdot \exp(-D_1(\mu,\mu_1)|z|)$.
\end{enumerate}
\end{lemma}

For a Boolean function $e$ and $r>0$ define $e_{\leq r}$ to be the function
that agrees with $e$ on points of Hamming weight at most $r$ and is 0 outside.

\begin{corollary}\label{cor:3}
Let $f$,$\mu_1$,$\mu$,
$g$,$r_a$ be as in Lemma~\ref{le:2}.  Then for any $r \geq r_a$
the function $h=g_{\leq r}$ satisfies:
\begin{enumerate}
\item $|\supp(h)|\leq k$
\item $\supp(h)$ consists only of points $z$ with $|z|\leq r$.
\item $\Vert h \Vert_1\geq 1/2k$
\item For all subsets $S$,  $|\widehat{g}(S)|\geq |\widehat{h}(S)|-\exp(-D_1(\mu,\mu_1)r)\Vert f\Vert_1$.
\end{enumerate}
 \end{corollary}
\begin{proof}
The first part follows from the fact that $\supp(h) \subseteq \supp(g) \subseteq \supp(f)$.
The second part follows from the definition of $h=g_r$.  The third part
comes from $\Vert h \Vert_1 \geq |h(0)| = |g(0)|$ and the
first conclusion of Lemma~\ref{le:2}. The fourth part follows from
the fact that $|\widehat{g} (S) - \widehat{h}(S) | \le \Vert g - h \Vert_1$
and the second part of Lemma~\ref{le:2}.
\end{proof}\\

%
%
%
}
\noindent
\ignore{
\textbf{Proof of Theorem~\ref{thm:2}:}
Without loss of generality assume $\Vert f \Vert_1=1$.
Let $\mu_0$ be as in the Theorem, and let $\mu>\mu_0$.

Let $a=a_{\mu,\mu_0}$, $r_a$ and $g=g_{\mu,\mu_0}$ be as in
Lemma~\ref{le:2}.  We will define a parameter $r_h$ by:

\[
r_h=\max \bigg(r_a,\frac{\ln(2k^2)}{D_1(\mu,\mu_0)-\ln(2)} \bigg),
\]
whose definition will be motivated below.  Note that
$D_1(\mu,\mu_0) > D_1(\mu_0,\mu_0)$ (which is easy to verify from $\mu>\mu_0$) and
$D_1(\mu_0,\mu_0) =\ln(2)$ (which was how $\mu_0$ was determined).  Therefore
$r_h$ is positive and satisfies $r_h=O(\ln(k))$.

Let \rnote{$h$ be the Boolean} function that is equal to $g$ on points of \rnote{Hamming}
weight at most $r_h$ and is 0 outside.
By applying Theorem~\ref{le:4},  there exists $S\subset [n]$,
$|S| \le r_h$, such that
\begin{eqnarray*}
|\widehat{h}(S)|\geq 2^{-r_h}k^{-1} \Vert h \Vert_{\infty}.
\end{eqnarray*}

By  Corollary~\ref{cor:3} with $r=r_h$ (here we need $r_h \geq r_a$) we have: \rnote{
\begin{eqnarray*}
|\widehat{g}(S)| &\geq& |\widehat{h}(S)|-e^{-D_1(\mu,\mu_0) r_h}\\
&\geq &  2^{-r_h}2 k^{-2}-e^{-D_1(\mu,\mu_0) r_h}\\
& \geq & e^{-D_1(\mu,\mu_0)r_h} = k^{-O(1)},
\end{eqnarray*}}
where the final inequality holds because of the choice of $r_h$. (Indeed,
this inequality is the reason for the given definition of $r_h$.)

}

\ignore{

\subsection{Proof of Lemma \ref{le:2}}\label{sec:2}

As a preliminary to the proof, we recall the following version of the Chernoff-Hoeffding bound
for sums of indicator random variables.

\begin{lemma}
\label{chernoff}(Chernoff-Hoeffding Bound)
Suppose $X_1,...,X_n$ are independent $\{0,1\}$-valued
binomial random variables with  $p=E[X_i]$ and let $q \in [p,1]$. Then
\begin{eqnarray*}
\Pr \left(\frac{1}{n}\sum X_i \geq q\right) \leq e^{-D(q \Vert p)n}
\end{eqnarray*}
\end{lemma}

We have $f$, $\mu>\mu_1>0$, the function $a=a_{\mu,\mu_1}$ and $g=g_{\mu,\mu_1}$ be as in the lemma.
By definition, $g=f\cdot (T_{\mu} a)$.  Note that since $a$ is a Boolean valued function,
to see that $(T_{\mu} a)(x)=Pr_{e\sim D_{\mu}}[a(x+e)=1]$.

To prove the first part of the lemma it suffices to show that $(T_{\mu} a)(0) \geq 1/2$ and
we have:

\begin{eqnarray*}
|T_{\mu}a(0)| &= &Pr_{e\sim D_{\mu}}[\rnote{a}(e)=1]\\
& \leq & 1-\sum_{z \in far(f)} Pr_{e\sim D_{\mu}}[a_z(e)=0].
\end{eqnarray*}

Now $a_z(e)=0$ if and only if $e$ has more than $\frac{1-\mu}{2}|z|$ 1's in the positions of $\supp(z)$.
Each of these bits is 1 with probability $(1-\mu)/2$ and so (recalling the definition of $D_2=D_2(\mu,\mu_1)$)
we can apply Lemma~\ref{chernoff} to bound
this probability from above by $e^{-D_2 r_a}$.  Using the definition of $r_a$
given in the lemma, this is at most $k^{-2}$.  Summing over the at most $k$
points $z \in far(f)$ we get that $|T_{\mu}a(0)| \geq (1-1/k)\geq 1/2$.

For the second part of the lemma, we let $z \in far(f)$ and we want
to bound $|g(z)|$ from above and for this it suffices to bound
$|T_{\mu}a(z)|=Pr_{e\sim D_{\mu}}[a(z+e)=1] \leq Pr_{e\sim D_{\mu}}[a_z(z+e)=1]$. Now, $a_z(z+e)=1$ if and only if $e$ has more than \rnote{$\frac{1+\mu}{2}|z|$}
1's and (recalling the definition of) $D_1=D_1(\mu,\mu_1)$, Lemma
~\ref{chernoff} gives the desired upper  bound of $e^{-D_1|z|}$.

}

\section{Proof of Theorem~\ref{thm:MS1}}\label{sec:4}
Towards the proof of Theorem~\ref{thm:MS1}, we first recall the following basic facts about $\mathsf{AND}_{z}(\cdot)$.
\begin{proposition}\label{prop:specAND}
For any $z \in \{0,1\}$, the function $\mathsf{AND}_{z} : \{0,1\}^n \rightarrow \{0,1\}$ satisfies the following:
\begin{itemize}
\item $\widehat{\mathsf{AND}_{z}}$   is supported on the subsets of $\{i: z_i =1\}$,
\item and $\Vert \mathsf{AND}_{z} \Vert_{L_1} =1$.
\end{itemize}
\end{proposition}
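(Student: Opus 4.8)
This proposition is an elementary warm-up before the substantive construction of $\mathsf{AND}_{\delta,z}$, so the plan is simply to compute the Fourier expansion of $\mathsf{AND}_z$ directly and read off both claims.

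\textbf{Plan.} Fix $z \in \{0,1\}^n$ and let $Z = \{i : z_i = 1\}$. By definition $\mathsf{AND}_z(x) = \prod_{i \in Z} x_i$, so I would first rewrite each variable in terms of its $\pm 1$-valued character: using $x_i = \frac{1 - \chi_{\{i\}}(x)}{2}$ (since $\chi_{\{i\}}(x) = (-1)^{x_i}$ equals $1$ when $x_i = 0$ and $-1$ when $x_i = 1$), we get
\[
\mathsf{AND}_z(x) = \prod_{i \in Z} \frac{1 - \chi_{\{i\}}(x)}{2} = \frac{1}{2^{|Z|}} \prod_{i \in Z} \big(1 - \chi_{\{i\}}(x)\big).
\]
Expanding the product over all subsets $S \subseteq Z$ gives $\mathsf{AND}_z(x) = 2^{-|Z|} \sum_{S \subseteq Z} (-1)^{|S|} \chi_S(x)$, using the multiplicativity $\prod_{i \in S} \chi_{\{i\}} = \chi_S$. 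Hence the Fourier coefficients (in the normalization of the paper where $\mathsf{AND}_z = \sum_S (\mathsf{AND}_z)_S \chi_S$) are $(\mathsf{AND}_z)_S = (-1)^{|S|} 2^{-|Z|}$ for $S \subseteq Z$ and $0$ otherwise.

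\textbf{Reading off the two claims.} The first bullet is immediate: the expansion shows $(\mathsf{AND}_z)_S \neq 0$ only when $S \subseteq Z = \{i : z_i = 1\}$, so $\widehat{\mathsf{AND}_z}$ is supported on subsets of $\{i : z_i = 1\}$. For the second bullet, $\Vert \mathsf{AND}_z \Vert_{L_1} = \sum_S |(\mathsf{AND}_z)_S| = \sum_{S \subseteq Z} 2^{-|Z|} = 2^{|Z|} \cdot 2^{-|Z|} = 1$, which is the claimed identity. (Alternatively one can note $\mathsf{AND}_z$ is a product over disjoint coordinate blocks of the single-variable functions $x_i$, each of which has $L_1$-Fourier-norm $1$, and $\Vert \cdot \Vert_{L_1}$ is multiplicative under such tensor products; but the direct expansion is cleaner here.)

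\textbf{Anticipated obstacle.} There is essentially no obstacle — the only things to be careful about are (i) the sign/normalization convention relating $x_i \in \{0,1\}$ to $\chi_{\{i\}}(x) \in \{+1,-1\}$, so that the $(-1)^{|S|}$ comes out correctly, and (ii) matching the paper's Fourier normalization ($f_S = 2^{-n}|\widehat f(S)|$, $\Vert \widehat f \Vert_{L_1} = \sum_S |f_S|$) so that the computed norm is exactly $1$ rather than off by a power of $2$. Both are routine bookkeeping.
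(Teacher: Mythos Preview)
Your proposal is correct. The paper does not actually prove this proposition; it is introduced with ``we first recall the following basic facts'' and left without proof, so there is no approach in the paper to compare against. Your direct computation of the Fourier expansion via $x_i = \tfrac{1-\chi_{\{i\}}(x)}{2}$ is the standard verification, and your bookkeeping with the paper's normalization is right. (One cosmetic point: the statement as printed has $\Vert \mathsf{AND}_z \Vert_{L_1}$, but from its use in the proof of Proposition~\ref{prop:noisyand} it is clear the intended quantity is $\Vert \widehat{\mathsf{AND}_z} \Vert_{L_1}$, which is what you computed.)
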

Recall that $C = \{x_1, \ldots, x_k\}  \subseteq \{0,1\}^n$ where $|x_i| \le r$ (we are assuming that the size of the set $C$ is $k$ as opposed to at most $k$).   
The next proposition proves important structural properties of the function $\mathsf{AND}_{\delta, z}(\cdot)$.
\begin{proposition}\label{prop:noisyand}
Let $0 \le \delta \le 1$. Then, for any point $z \in C^{\downarrow}$, there exists $\mathsf{AND}_{\delta, z}: \{0,1\}^n \rightarrow \mathbb{R}$, with the following properties:
\begin{itemize}
\item For $y \in C^{\downarrow}$, $\mathsf{AND}_{\delta, z}(y) = \mathbf{1}_{y \succeq z} \cdot (1-\delta)^{|y|-|z|} $,
\item $\widehat{\mathsf{AND}_{\delta, z}} (S) \not =0$ only if $S \in C^{\downarrow}$.
\item $\Vert \widehat{\mathsf{AND}_{\delta, z}}\Vert_{L_1} \leq k \cdot (1+\delta)^{r-|z|}$.
\end{itemize}
\end{proposition}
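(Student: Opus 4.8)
The plan is to construct $\mathsf{AND}_{\delta, z}$ explicitly as a linear combination of the monomials $\{\mathsf{AND}_w\}_{w \in C^{\downarrow}}$, forcing the Fourier support to sit inside $C^{\downarrow}$, and then to pin down the coefficients by solving the interpolation constraint "$\mathsf{AND}_{\delta,z}(y) = \mathbf{1}_{y \succeq z}(1-\delta)^{|y|-|z|}$ for $y \in C^{\downarrow}$". Writing $\mathsf{AND}_{\delta,z} = \sum_{w \in C^{\downarrow}} \beta_w \cdot \mathsf{AND}_w$ and using that $\mathsf{AND}_w(y) = \mathbf{1}_{y \succeq w}$, the constraint becomes $\sum_{w \preceq y} \beta_w = \mathbf{1}_{y \succeq z}(1-\delta)^{|y|-|z|}$, i.e. the target function is the $\boldsymbol{\zeta}$-transform (on the poset $C^{\downarrow}$) of $\beta$. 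Since $\boldsymbol{\zeta}$ is invertible, $\beta = \boldsymbol{\mu}(\text{target})$, and by Proposition~\ref{prop:mobius} this gives $\beta_w = \sum_{w \preceq y,\, y \in C^{\downarrow}} (-1)^{|y \setminus w|} \cdot \mathbf{1}_{y \succeq z} (1-\delta)^{|y|-|z|}$. This immediately establishes the first two bullets: the interpolation identity holds by construction, and $\widehat{\mathsf{AND}_{\delta,z}}$ is supported on $C^{\downarrow}$ because each $\widehat{\mathsf{AND}_w}$ is supported on subsets of $w \in C^{\downarrow}$ and $C^{\downarrow}$ is downward closed. Note $\beta_w = 0$ unless $w \succeq z$ (if $w \not\succeq z$ then no $y \succeq w$ is also $\succeq z$... actually one must be slightly careful, but one can restrict the sum to $w \succeq z$ by observing that for the target to be nonzero we need $y \succeq z$, and if $w \not\succeq z$ the alternating sum over $\{y : w \preceq y,\ y \succeq z,\ y \in C^{\downarrow}\}$ telescopes to $0$ by the same downward-closed argument as in Proposition~\ref{prop:mobius}).

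For the third bullet, the key estimate is $\Vert \widehat{\mathsf{AND}_{\delta,z}} \Vert_{L_1} \le \sum_{w} |\beta_w| \cdot \Vert \widehat{\mathsf{AND}_w} \Vert_{L_1} = \sum_w |\beta_w|$, using Proposition~\ref{prop:specAND} (each $\Vert \mathsf{AND}_w \Vert_{L_1} = 1$) and the triangle inequality. So it suffices to bound $\sum_{w \in C^{\downarrow}} |\beta_w|$. The plan here is to evaluate the alternating sum defining $\beta_w$ more cleverly rather than bound it term-by-term (which would lose a $2^r$ factor). Fix $w \succeq z$. The sum is over $y \in C^{\downarrow}$ with $w \preceq y$; I would like to "complete" this to a sum over all $y$ with $w \preceq y \preceq x_i$ for each $x_i \in C$ containing $w$, where the alternating sum over an interval $[w, x_i]$ of the Boolean lattice evaluates cleanly: $\sum_{w \preceq y \preceq x_i} (-1)^{|y\setminus w|}(1-\delta)^{|y|-|z|} = (1-\delta)^{|w|-|z|}\sum_{T \subseteq x_i \setminus w}(-1)^{|T|}(1-\delta)^{|T|} = (1-\delta)^{|w|-|z|} \cdot \delta^{|x_i \setminus w|}$, using $\sum_{T \subseteq U}(-\,(1-\delta))^{|T|} = \delta^{|U|}$. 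An inclusion–exclusion over which $x_i$'s one uses to cover $C^{\downarrow} \cap [w,\cdot]$ then bounds $|\beta_w|$ by something like $(1-\delta)^{|w|-|z|}\sum_{i : x_i \succeq w}\delta^{|x_i| - |w|}$ up to a $\mathrm{poly}(k)$ overcounting factor; summing over the at most (roughly) one term per $x_i$ and per $w \preceq x_i$, and bounding $\sum_{w \preceq x_i}(1-\delta)^{|w|-|z|}\delta^{|x_i|-|w|} \le \sum_{j} \binom{|x_i|}{j}(1-\delta)^{\cdots}\delta^{\cdots} \le (1+\delta)^{|x_i|-|z|} \le (1+\delta)^{r-|z|}$ via the binomial theorem, and then summing over $i \le k$, yields $\sum_w |\beta_w| \le k \cdot (1+\delta)^{r-|z|}$.

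The main obstacle I anticipate is the middle step: converting the sum over $y \in C^{\downarrow}$ into a clean expression, since $C^{\downarrow}$ is an arbitrary downset, not an interval of the cube, so the alternating sum does not telescope by itself. The honest fix is an inclusion–exclusion: write $\mathbf{1}_{y \in C^{\downarrow}} = \mathbf{1}_{\exists i\ y \preceq x_i}$ and expand by inclusion–exclusion over the $x_i$'s, reducing each term to a sum over an interval $[w, \bigwedge_{i \in I} x_i]$ which does evaluate in closed form. One must then control the number of inclusion–exclusion terms; the cleanest route, and the one I would pursue, is to avoid full inclusion–exclusion and instead route through the M\"obius transform identity directly — i.e. observe $\beta = \boldsymbol{\mu}^{(C^{\downarrow})}(\text{target})$ and bound $\Vert \beta \Vert_1$ using that, for each $x_i$, the restriction of the target to the principal ideal $x_i^{\downarrow}$ has M\"obius transform with $\ell_1$-norm exactly $(1+\delta)^{|x_i|-|z|}$ (a single binomial computation on the cube), then splice these together over the $k$ generators at the cost of a $\mathrm{poly}(k)$ factor. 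Getting the bookkeeping of this splicing to come out to the stated $k \cdot (1+\delta)^{r-|z|}$ (rather than $k^2$ or worse) is the delicate part, and is presumably where the robust-local-inverse machinery of Theorem~\ref{thm:moitra-saks} will ultimately be invoked in the companion argument for Theorem~\ref{thm:MS1}, though for this proposition a direct combinatorial bound should suffice.
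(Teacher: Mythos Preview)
Your overall plan---write $\mathsf{AND}_{\delta,z}=\sum_{w\in C^{\downarrow}}\beta_w\,\mathsf{AND}_w$ and bound $\sum_w|\beta_w|$---is exactly the paper's approach. But you have applied the M\"obius inversion in the wrong direction, and this is precisely what makes the third bullet look hard and sends you toward inclusion--exclusion and ``splicing.''

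The constraint $\sum_{w\preceq y}\beta_w=\text{target}(y)$ reads $\text{target}=\boldsymbol{\zeta}^{T}\beta$, not $\boldsymbol{\zeta}\beta$ (in the paper's convention $\boldsymbol{\zeta}$ sums \emph{upward}). Hence $\beta=\boldsymbol{\mu}^{T}(\text{target})$, i.e.
\[
\beta_w \;=\; \sum_{y\preceq w}(-1)^{|w\setminus y|}\,\mathbf{1}_{y\succeq z}\,(1-\delta)^{|y|-|z|},
\]
a sum \emph{downward} from $w$. Now the crucial point: $C^{\downarrow}$ is downward closed, so every $y$ with $z\preceq y\preceq w$ automatically lies in $C^{\downarrow}$. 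The sum is therefore over the full Boolean interval $[z,w]$ and evaluates in closed form:
\[
\beta_w=\mathbf{1}_{w\succeq z}\sum_{T\subseteq w\setminus z}(-1)^{|w\setminus z|-|T|}(1-\delta)^{|T|}
=\mathbf{1}_{w\succeq z}\,(-\delta)^{|w|-|z|}.
\]
With this explicit formula the $L_1$ bound is a one-line count:
\[
\sum_{w\in C^{\downarrow}}|\beta_w|=\sum_{\substack{w\in C^{\downarrow}\\w\succeq z}}\delta^{|w|-|z|}
\le\sum_{i=1}^{k}\sum_{z\preceq w\preceq x_i}\delta^{|w|-|z|}
=\sum_{i}(1+\delta)^{|x_i|-|z|}\le k(1+\delta)^{r-|z|}.
\]
No inclusion--exclusion over the $x_i$, and no robust local inverse (that enters only later, in Theorem~\ref{thm:MS1}).

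The paper's proof is essentially this computation packaged differently: it writes down the explicit formula $\mathsf{AND}_{\delta,0}=\sum_{S\in C^{\downarrow}}(-\delta)^{|S|}\mathsf{AND}_S$ directly and verifies it, then handles general $z$ by tensoring with $\mathsf{AND}_z$ on the coordinates inside $z$. Your upward sum $\sum_{y\in C^{\downarrow},\,y\succeq w}(\cdots)$ is \emph{not} over a full interval---it depends on which $x_i$ lie above $w$---which is exactly why you were forced into inclusion--exclusion arguments that never quite close. Flip the direction and the proposition becomes elementary.
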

 \begin{proof} Let  $\mathsf{Sym}_j : \mathbb{R}^n \rightarrow \mathbb{R}$ as the elementary symmetric polynomial of degree $j$. We first construct the function $\mathsf{AND}_{\delta, 0}$ i.e.~the function $\mathsf{AND}_{\delta, z}$ where $z$ is the origin. Towards constructing $\mathsf{AND}_{\delta,0}$, we  define the function $h_{\delta,0} : \{0,1\}^n \rightarrow \mathbb{R}$ as
 $$
 h_{\delta,0}(y) = \sum_{j=0}^r (-\delta)^j  \cdot \mathsf{Sym}_j (y) = \sum_{j=0}^r \sum_{S \in \binom{[n]}{j}} (-\delta)^j  \cdot \mathsf{AND}_S (y).
 $$
Thus, for any $y \in \{0,1\}^n$, $|y| \le r$,
$$
h_{\delta,0}(y) = \sum_{j=0}^r (-\delta)^j  \binom{|y|}{j} = (1-\delta)^{|y|}.
$$
Next, observe that if $S \not \in C^{\downarrow}$, $\mathsf{AND}_S(y) =0$.  We define $\mathsf{AND}_{\delta, 0} : \{0,1\}^n \rightarrow \mathbb{R}$ as
$$
\mathsf{AND}_{\delta, 0}(y) = \sum_{j=0}^r \sum_{S \in C^{\downarrow} : |S| = j} (-\delta)^j  \cdot \mathsf{AND}_S (y).
$$
In comparison to $h_{\delta,0}(y)$, the only terms dropped in $\mathsf{AND}_{\delta, 0}(y) $ are $\mathsf{AND}_S (y)$ for $S \not \in C^{\downarrow}$. Thus,  for $y \in C^{\downarrow}$,
$$
\mathsf{AND}_{\delta, 0}(y) = \sum_{j=0}^r (-\delta)^j  \binom{|y|}{j} = (1-\delta)^{|y|}.
$$
Thus, this satisfies the first requirement. For the second requirement, we observe that $\widehat{\mathsf{AND}_S }$ is supported on $S^{\downarrow}$. Since $C^{\downarrow}$ is closed under downward closure, we get that $\widehat{\mathsf{AND}_{\delta, 0}}$ is supported on $C^{\downarrow}$.  For the final item, note that
$$
\Vert \widehat{\mathsf{AND}_{\delta, 0}} \Vert_{L_1} \le \sum_{j=0}^r \sum_{S \in C^{\downarrow} : |S| = j} \delta^j  \cdot \Vert \widehat{\mathsf{AND}_S}\Vert_{L_1} \le  \sum_{j=0}^r \sum_{S \in C^{\downarrow} : |S| = j} \delta^j.
$$
The last inequality uses Proposition~\ref{prop:specAND}. Note that $|C^{\downarrow} \cap \{S: |S| = j \}| \le k \cdot \binom{r}{j}$. Thus,
$$
\Vert \widehat{\mathsf{AND}_{\delta, 0}} \Vert_{L_1} \le  \sum_{j=0}^r \sum_{S \in C^{\downarrow} : |S| = j} \delta^j \le  \sum_{j=0}^r  \binom{r}{j} \cdot k \cdot \delta^j = k (1+\delta)^r.
$$
This finishes the construction of $\mathsf{AND}_{\delta,0}$. For $z \in C^{\downarrow} \setminus \{0\}$,
let $\mathcal{I}_z = \{i : z_i =1\}$. Define $\mathsf{AND}_{\delta, 0, \mathcal{I}_z} : \mathbb{R}^{n \setminus \mathcal{I}_z} \rightarrow \mathbb{R}$ as the function $\mathsf{AND}_{\delta,0}$ when the ambient dimensions are restricted to $[n] \setminus \mathcal{I}_z$. Note that correspondingly, we also project $C^{\downarrow}$ to the coordinates $[n] \setminus \mathcal{I}_z$.
$$
\mathsf{AND}_{\delta, z} (y) = \mathsf{AND}_{z} (y) \cdot \mathsf{AND}_{\delta, 0, \mathcal{I}_z}(y).
$$
First, by definition of $\mathsf{AND}_{\delta, 0, \mathcal{I}_z}(y)$, it  follows that for every $y \in C^{\downarrow}$,
$$
\mathsf{AND}_{\delta, 0, \mathcal{I}_z}(y) = (1-\delta)^{|y_{[n] \setminus \mathcal{I}_z}|} = (1-\delta)^{|y|-|z|}.
$$
This implies that $\mathsf{AND}_{\delta, z} (y) = \mathbf{1}_{y \succeq z} \cdot (1-\delta)^{|y|-|z|}$.

Next, by Proposition~\ref{prop:specAND}, $\widehat{\mathsf{AND}_z}$ is supported on the sets $\mathcal{I}_z$ and by the first part of our proof, $\widehat{\mathsf{AND}_{\delta, 0, \mathcal{I}_z}}$ is supported on the projection of $C^{\downarrow}$ to the coordinates in $[n] \setminus \mathcal{I}_z$. This together implies that $\widehat{\mathsf{AND}_{\delta, z}}$ is supported on $C^{\downarrow}$.

Finally, by Proposition~\ref{prop:specAND}, $\Vert \widehat{\mathsf{AND}_z} \Vert_{L_1} = 1$ and by the first part of our proof, $\Vert \widehat{\mathsf{AND}_{\delta, 0, \mathcal{I}_z}} \Vert_{L_1} \le k \cdot (1+\delta)^{r - |z|}$.  Combining these two, we get $\Vert  \widehat{\mathsf{AND}_{\delta, z}} \Vert_{L_1}  \le k \cdot (1+\delta)^{r-|z|}$.
This finishes the proof.
\end{proof}
\begin{proof}[Proof of Theorem~\ref{thm:MS1}]
Recall the matrix $A_{\mu, r} \in \mathbb{R}^{(r+1) \times (r+1)}$ is defined as $A_{\mu,r}(i,j) = \binom{i}{j} \mu^j (1-\mu)^{i-j}$ where the rows and columns are indexed by $0 \le i, j \le r$. Using Theorem~\ref{thm:moitra-saks}, there exists $v \in \mathbb{R}^{r+1}$ such that $\Vert A_{\mu, r} \cdot v - e_0 \Vert \le \eta$ where $\Vert v \Vert_{\infty} \le (2/\eta)^{(1/\delta) \log (2/\delta)}$. Further, $v$ can be computed in time $\mathsf{poly}(r)$.
~\\
We define $\ell (y) = \sum_{z \in C^{\downarrow}} v_{|z|} \cdot \delta^{|z|} \cdot \mathsf{AND}_{\delta, z}(y)$.  First, it easily follows that $\widehat{\ell}$ is supported on $C^{\downarrow}$.
For $y \in C^{\downarrow}$ define the set $\mathsf{Down}_{y,t} = \{ z \in C^{\downarrow} : z \preceq y $ and $|z|=t\}$. Since $C^{\downarrow}$ is closed under downward closure,
$
|\mathsf{Down}_{y,t}| = \binom{|y|}{t}.
$ Note that $\mathsf{AND}_{\delta,z}(y) =0$ for $z \not \in \cup_{0 \le t \le |y|} \mathsf{Down}_{y,t}$.
Thus, for any $y\in C^{\downarrow}$,
$$
\ell (y) = \sum_{z \in C^{\downarrow}} v_{|z|} \cdot \delta^{|z|} \cdot \mathsf{AND}_{\delta, z}(y) =  \sum_{0 \le t \le |y|}  v_{t}  \cdot \delta^t \cdot (1-\delta)^{|y| -t} \cdot \binom{|y|}{t} = (A_{\mu,r} \cdot v)_{|y|}.
$$
Using Theorem~\ref{thm:moitra-saks},  $\ell(0) =1$ and for $x \in C^{\downarrow} \setminus \{0\}$, $|\ell(x)| \le \eta$.  To prove Theorem~\ref{thm:MS1}, all that remains is to bound $\Vert \widehat{\ell} \Vert_{L_1}$.
\begin{eqnarray}
\Vert \widehat{\ell} \Vert_{L_1} &\le& \sum_{z \in C^{\downarrow}} |v_{|z|}| \cdot \delta^{|z|} \cdot \Vert \widehat{\mathsf{AND}_{\delta, z}} \Vert_{L_1} \ \textrm{(follows from definition of $\ell$)}  \nonumber \\
&\le&  \sum_{z \in C^{\downarrow}} |v_{|z|}| \cdot \delta^{|z|} \cdot k \cdot (1+\delta)^{r-|z|} \ \textrm{(using Proposition~\ref{prop:noisyand})} \nonumber \\
&\le& \Vert v \Vert_{\infty} \cdot \sum_{z \in C^{\downarrow}} \delta^{|z|} \cdot k \cdot (1+\delta)^{r-|z|} \nonumber \\
&=& \Vert v \Vert_{\infty} \cdot \sum_{0 \le j \le r} \delta^{j} \cdot k \cdot (1+\delta)^{r-j}  \cdot | \{z \in C^{\downarrow} : |z|=j\} |. \label{eq:est2}
\end{eqnarray}
Since $|C| \le k$ and $C^{\downarrow} \subseteq B_{r}(0)$, it easily follows that
$$
| \{z \in C^{\downarrow} : |z|=j\} | \le k \cdot \binom{r}{j}.
$$
Plugging this in (\ref{eq:est2}), we get
\begin{eqnarray*}
\Vert \widehat{\ell} \Vert_{L_1} &\le& \Vert v \Vert_{\infty} \sum_{0 \le j \le r} \delta^{j} \cdot k \cdot (1+\delta)^{r-j}  \cdot k \cdot \binom{r}{j} \\
&=& k^2 \cdot \Vert v \Vert_{\infty} \cdot (1+2\delta)^r.
\end{eqnarray*}
Using $\Vert v \Vert_{\infty} \le (2/\eta)^{(1/\delta) \cdot \log (2/\delta)}$, we get the final bound on $\Vert \widehat{\ell} \Vert_{L_1}$.
\end{proof}

\section*{Acknowledgments} 
A.D. is grateful to Rocco Servedio for many illuminating conversations about this problem. 

\bibliography{reference}
\appendix
\section{Proof of Lemma~\ref{lem:lovettlem}}\label{app:Lem1.2}
We begin by restating Lemma~\ref{lem:lovettlem}.
\begin{lemma*}
Let $ \{x_1, \ldots, x_k\}$ where $x_1 =0$. Define the set $\mathsf{Far} = \{x_i : d_H(x_1, x_i) \ge (1/\mu^2) \cdot \log k \}$ and define the set $E = \{y \in \{0,1\}^n: d_H(x_1, y) \le d_H(x_i, y) \ \textrm{for all } \ x_i \in \mathsf{Far} \}$.  
\begin{itemize}
\item $(T_{\mu} \mathbf{1}_E)(0) \ge 1/2$.
\item For $x_i \in \mathsf{Far}$, $(T_{\mu} \mathbf{1}_E)(x_i) \le e^{-\frac{1}{2} \cdot \mu^2 \cdot d_H(x_1, x_i)}$.
\end{itemize}
Clearly, the function $\mathbf{1}_E(\cdot)$ can be computed in time $\mathsf{poly}(n,k)$. Further, $(T_{\mu} \mathbf{1}_E)(0)$ can be computed to additive error $\epsilon$ in time $\mathsf{poly}(n,k,1/\epsilon)$.
\end{lemma*}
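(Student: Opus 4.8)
Since $x_1=0$, the defining condition of $E$ reads $y\in E\iff |y|\le d_H(x_i,y)$ for every $x_i\in\mathsf{Far}$, and $(T_\mu\mathbf{1}_E)(x)=\Pr_{e\sim D_\mu}[x+e\in E]$. The plan is to reduce, for each $x_i$, the single constraint ``$d_H(0,w)\le d_H(x_i,w)$'' to a one-sided binomial large-deviation event. Fix $x_i$ with support $S_i=\{j:(x_i)_j=1\}$ and $m_i=|S_i|=d_H(0,x_i)$, and for $w\in\{0,1\}^n$ split the coordinates into $S_i$ and its complement: then $d_H(0,w)=|w_{S_i}|+|w_{\overline{S_i}}|$ while $d_H(x_i,w)=(m_i-|w_{S_i}|)+|w_{\overline{S_i}}|$, so
$$
d_H(0,w)\le d_H(x_i,w)\quad\Longleftrightarrow\quad |w_{S_i}|\le m_i/2 .
$$
Thus the $x_i$-constraint on $w=x+e$ depends only on the bits of $w$ inside $S_i$.

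For the second bullet, take $x_i\in\mathsf{Far}$ and $x=x_i$, so $w=x_i+e$ and $w_{S_i}$ is the bitwise complement of $e_{S_i}$, giving $|w_{S_i}|=m_i-|e_{S_i}|$. The single constraint coming from $x_i$ itself already yields
$$
(T_\mu\mathbf{1}_E)(x_i)\le\Pr_{e\sim D_\mu}\big[m_i-|e_{S_i}|\le m_i/2\big]=\Pr\big[|e_{S_i}|\ge m_i/2\big],
$$
where $|e_{S_i}|$ is a sum of $m_i$ independent $\mathrm{Bernoulli}\big(\tfrac{1-\mu}{2}\big)$ variables with mean $\tfrac{1-\mu}{2}m_i=m_i/2-\tfrac{\mu}{2}m_i$. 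A Hoeffding bound with gap $\tfrac{\mu}{2}m_i$ gives $\Pr[|e_{S_i}|\ge m_i/2]\le\exp\big(-2(\tfrac{\mu}{2}m_i)^2/m_i\big)=e^{-\mu^2 m_i/2}$, which is exactly the asserted bound with $m_i=d_H(x_1,x_i)$.

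For the first bullet, $x=0$ and $w=e$, so $w_{S_i}=e_{S_i}$ and $e\notin E$ iff $|e_{S_i}|>m_i/2$ for some $x_i\in\mathsf{Far}$. Union-bounding over the at most $k$ points of $\mathsf{Far}$ and reusing the same tail estimate,
$$
1-(T_\mu\mathbf{1}_E)(0)=\Pr_{e\sim D_\mu}[e\notin E]\le\sum_{x_i\in\mathsf{Far}}\Pr\big[|e_{S_i}|>m_i/2\big]\le\sum_{x_i\in\mathsf{Far}}e^{-\mu^2 m_i/2}\le k\cdot e^{-\mu^2 r_0/2},
$$
where $r_0=\min_{x_i\in\mathsf{Far}}m_i\ge\mu^{-2}\log k$; for the threshold defining $\mathsf{Far}$ (taken as a sufficiently large absolute-constant multiple of $\mu^{-2}\log k$) this is at most $1/2$, so $(T_\mu\mathbf{1}_E)(0)\ge1/2$. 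For the algorithmic claims: evaluating $\mathbf{1}_E(y)$ amounts to computing the at most $k$ Hamming distances $d_H(0,y)$ and $d_H(x_i,y)$, in time $O(nk)$; and $(T_\mu\mathbf{1}_E)(0)=\mathbb{E}_{e\sim D_\mu}[\mathbf{1}_E(e)]$ is estimated to additive error $\epsilon$ with confidence $1-\kappa$ by averaging $\mathbf{1}_E$ over $O(\epsilon^{-2}\log(1/\kappa))$ independent draws $e\sim D_\mu$ and applying a Chernoff bound, for total time $\mathsf{poly}(n,k,1/\epsilon)\cdot\log(1/\kappa)$ (and the estimate is $\ge 1/2$ up to the $\epsilon$ error).

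The only genuinely content-bearing step is the first, purely combinatorial one: the equivalence $d_H(0,w)\le d_H(x_i,w)\iff|w_{S_i}|\le m_i/2$ together with the observation that on $w=x_i+e$ (resp.\ $w=e$) the relevant binomial mean sits a constant fraction $\mu/2$ away from the threshold $m_i/2$. That constant margin is precisely what produces the exponential decay $e^{-\mu^2 m_i/2}$; everything afterwards is a routine Hoeffding/union-bound computation, with the only mild subtlety being to choose the constant in the definition of $\mathsf{Far}$ large enough that the union bound in the first bullet beats $1/2$.
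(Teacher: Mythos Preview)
Your proof is correct and follows essentially the same route as the paper's: reduce the constraint $d_H(0,w)\le d_H(x_i,w)$ to $|w_{S_i}|\le m_i/2$, apply a binomial tail bound (you use Hoeffding, the paper uses a multiplicative Chernoff form, but the resulting exponent $e^{-\mu^2 m_i/2}$ is the same), and union-bound over $\mathsf{Far}$ for the first item. Your observation that the constant in the threshold $(1/\mu^2)\log k$ needs to be a large enough absolute constant for the union bound to yield $\le 1/2$ is well taken---the paper's own appendix glosses over exactly this point.
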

\begin{proof}
We first lower bound $(T_{\mu} \mathbf{1}_E)(x_1)$. Let $s=\log (k)/\mu^2$. By definition,
\begin{eqnarray*}
(T_{\mu} \mathbf{1}_E)(x_1)=\Pr_{e \sim D_{\mu}} [x_1+e\in E] &=& 1 - \Pr_{e \in D_{\mu}}[x_1 + e \not \in E] \\
&\ge& 1 - \bigg( \sum_{i:d_H (x_1,x_i)\geq s} \Pr_{y \sim x_1 +e} [d_H (x_1,y)\geq d_H (x_i,y)]\bigg)
\end{eqnarray*}
The last inequality follows by the definition of $E$ and union bound.  To lower bound the right hand side, let us define $S_i = \{ j \in [n] : \ x_i $ and $x_i$ differ in the $j^{th}$ coordinate$\}$. If $d_H(x_i, x_1) \ge s$, then $|S_i| \ge s$. For such a point $x_i \in S$,
$$
\Pr_{y \sim x_1 + D_{\mu}} [d_H (x_1,y)\geq d_H(x_i,y)]=\Pr_{e \sim D_{\mu}} \left[\sum_{j\in S_i} e_j \geq |S_i|/2\right]
$$
To bound the above sum, we recall the Chernoff bound.
\begin{proposition*}
Let $X_1, \ldots, X_n$ be $n$ independent $\{0,1\}$ random variables such that $1 \le i \le n$, $\mathbf{E}[X_i] = p$. If $q>p$, then,
$$
\Pr\bigg[X_1 + \ldots + X_n \ge n \cdot q \bigg]  \le \exp \bigg( -\frac{n}{2} \cdot \bigg(\frac qp -1\bigg)^2 \bigg).
$$
\end{proposition*}
Applying the above proposition, we get that
$$
\Pr_{y \sim x_1 + D_{\mu}} [d_H (x_1,y)\geq d_H(x_i,y)] \le \exp \bigg( \frac{-|S_i|}{2} \cdot \mu^2 \bigg) \le \frac{1}{2k}.
$$
This implies that
$$
(T_{\mu} \mathbf{1}_E)(x_1) \ge 1 - \bigg( \sum_{i:d_H (x_1,x_i)\geq s} \Pr_{y \sim x_1 +e} [d_H (x_1,y)\geq d_H (x_i,y)]\bigg) \ge  \frac 12.
$$
We now  upper bound $(T_{\mu} \mathbf{1}_E)(x_i)$ for $x_i \in \mathsf{Far}$. Note that $(T_{\mu} \mathbf{1}_E)(x_i) = \Pr_{e \sim D_\mu} [x_i + e \in E]$.
Note that if $x_i + e \in E$, then $d_H(x_i + e, x_1) \le d_H(x_i +e, x_i)$.  This implies that $\sum_{j \in S_i} e_j \ge |S_i|/2$. Applying the Chernoff bound, we have
$$
(T_{\mu} \mathbf{1}_E)(x_i) = \Pr_{e \sim D_\mu} [x_i + e \in E] \le \Pr_{e \sim D_{\mu}} \bigg[ \sum_{j \in S_i} e_j \ge \frac{|S_i|}{2}\bigg] \le e^{-\frac{1}{2} \cdot \mu^2 \cdot d_H(x_1, x_i)}.
$$
The fact that $\mathbf{1}_E(\cdot)$ can be computed in time $\mathsf{poly}(n,k)$ follows from the definition of $E$. Further, since $\mathbf{1}_E(\cdot)$ is computable in time $\mathsf{poly}(n,k)$ and $D_{\mu}$ is samplable in time $\mathsf{poly}(n)$, we immediately get that
$$
(T_{\mu} \mathbf{1}_E)(x_1) = \Pr_{e \sim D_{\mu}} [x_1 + e \in E],
$$
can be approximated to $\epsilon$ in time $\mathsf{poly}(n,k, 1/\epsilon) \cdot \log(1/\kappa)$ with confidence $1-\kappa$.
\end{proof}

\ignore{
\section{Maximum Likelihood Estimator for the population recovery problem}\label{sec:MLE}

We now describe the maximum likelihood estimator (MLE) for the population recovery problem.
Recall that in our setting, the learner has access to random samples from $T_{\mu} \cdot \pi$ where $\pi$ is some unknown distribution supported on $\{x_1, \ldots, x_k\}$. However, the support points $\{x_1, \ldots, x_k\}$ are known to the learner while the weights $\pi(x_1), \ldots, \pi(x_k)$ are unknown. Given any point $x \in \{0,1\}^n$,
\[
T_{\mu} \cdot \pi (x) = \sum_{i=1}^k \pi(x_i) \cdot \bigg( \frac{1-\mu}{2}\bigg)^{d(x,x_i)} \cdot \bigg( \frac{1+\mu}{2}\bigg)^{n-d(x,x_i)}.
\]
Given a point $z \in \{0,1\}^n$, define $F : \mathbb{R}^k \rightarrow \mathbb{R}$ as follows:
\[
F(\pi' , z) = \ln \bigg(\sum_{i=1}^k \pi'(x_i) \cdot \bigg( \frac{1-\mu}{2}\bigg)^{d(z,x_i)} \cdot \bigg( \frac{1+\mu}{2}\bigg)^{n-d(z,x_i)}\bigg).
\]
Thus given samples $z_1, \ldots, z_m$, the MLE for the noisy population recovery problem is thus
\begin{alignat*}{2}
  & \text{maximize: } & & \frac{1}{m} \cdot \sum_{j=1}^m F(\pi', z_j) \\
   & \text{subject to: }& \quad & \forall j \in [m], \ \pi'(x_j) \ge 0, \ \sum_{i=1}^k \pi'(x_i)=1.
   \begin{aligned}[t]
                    \\[3ex]
                \end{aligned}
\end{alignat*}
The main observation is that objective function is log-linear in the unknowns and thus maximizing it is equivalent to minimizing a convex function. Hence, the above optimization problem can be solved efficiently in time $\poly(m, k)$ .  We choose a parameter $\eta>0$ to be fixed later and consider an augmented optimization problem given below:
\begin{alignat*}{2}
  & \text{maximize: } & & \frac{1}{m} \cdot \sum_{j=1}^m F(\pi', z_j) \\
   & \text{subject to: }& \quad & \forall j \in [m], \ \pi'(x_j) \ge 0, \ \sum_{i=1}^k \pi'(x_i)=1, \ \forall i \in [k], \ \pi'(x_i) \ge \eta.
   \begin{aligned}[t]
                    \\[3ex]
                \end{aligned}
\end{alignat*}
Call the above optimization problem as $\Lambda$. It is easy to see that this optimization problem remains efficiently solvable in time $\poly(m,n,k)$.
What remains to be done is to show the correctness of the estimator. For this, we first introduce an operation on probability distributions over $\{x_1, \ldots, x_k\}$. Namely, given any probability distribution $\gamma$ over $x_1, \ldots, x_k$ and an error parameter $\lambda>0$, we define $\gamma_{\lambda}$ to be the probability distribution over $\{x_1, \ldots, x_k\}$ which is closest to $\gamma$ in $\ell_1$ distance and for each $i \in [k]$, $\gamma(x_i) \ge \lambda$. Note that we will require $\lambda \le 1/2k$. It is easy to see that
\begin{equation}\label{eq:bound12}
\Vert \gamma_{\lambda} - \gamma \Vert_1 \le k \cdot \lambda.
\end{equation}
Next, we show the effect of this operation on the log-likelihood function.
\begin{claim}\label{fact:discrete2}
For any $z \in \{0,1\}^n$ and $\lambda \le \frac{1}{4k}$, $F(\gamma,z) - F(\gamma_{\lambda},z) \le 2 k \cdot \lambda$.
\end{claim}
\begin{proof}
\begin{eqnarray*}
F(\gamma,z) - F(\gamma_{\lambda},z)  &=& \ln \frac{\bigg(\sum_{i=1}^k \gamma(x_i) \cdot \bigg( \frac{1-\mu}{2}\bigg)^{d(z,x_i)} \cdot \bigg( \frac{1+\mu}{2}\bigg)^{n-d(z,x_i)}\bigg)}{\bigg(\sum_{i=1}^k \gamma_{\lambda}(x_i) \cdot \bigg( \frac{1-\mu}{2}\bigg)^{d(z,x_i)} \cdot \bigg( \frac{1+\mu}{2}\bigg)^{n-d(z,x_i)}\bigg)}  \\
&\le& \ln (1 + k \cdot \lambda) \le 2 k \cdot \lambda.
\end{eqnarray*}
Here the penultimate inequality uses the fact that $\max \gamma(x_i)/\gamma_{\lambda}(x_i) \le 1 + k \cdot \lambda$ and the last inequality uses that for $x \le 1/4$,  $\ln (1 + x) \le 2x$.
\end{proof}
~\\
For the rest of this section, set $\rho = \epsilon \cdot k^{-O_{\mu}(1)}$ where the constants in the exponent are chosen so that following Theorem~\ref{thm:2}, we have that if
$\Vert \pi_1 - \pi_2 \Vert_1 \ge \epsilon$, then $\Vert T_{\mu} \pi_1 - T_{\mu} \pi_2 \Vert_1 \ge \rho$ for all $\mu > \mu_0$.
We now  state the main theorem of this section.
\begin{theorem}\label{thm:MLE}
For $m  = \poly\big (n, 1/\rho, \log (1/\delta) \big)$, with probability $1-\delta$ over samples $z_1, \ldots, z_m$ drawn uniformly at random from $T_{\mu} \pi$, the optimum $\pi'$  to $\Lambda$ satisfies $\Vert \pi' - \pi \Vert_1 \le \epsilon$.
\end{theorem}
To show this, we adopt the following strategy: Let $\mathsf{OPT} (z_1, \ldots, z_m) = \frac{1}{m} \sum_{j=1}^m F(\pi, z_j)$. Also, let $A_{feas}$ be defined to be the set of all probability distributions $\gamma$ over $\{x_1, \ldots, x_k\}$ such that  for all $i \in [k]$, $\gamma_i \ge \eta$.  First of all, notice that there exists $\gamma \in A_{feas}$ such that  $\Vert \gamma- \pi \Vert_1 \le  k \cdot \eta$ and $$\frac{1}{m} \sum_{j=1}^m F(\gamma, z_j) \ge \mathsf{OPT} (z_1,\ldots, z_m) - 2k \cdot \eta.$$
This follows by setting $\gamma = \pi_{\eta}$. Now, consider the set $A_{feas,1}$ as follows:
$$
A_{feas,1} = \{\gamma \in A_{feas} : \Vert \gamma  - \pi \Vert_1 \ge \epsilon \}.
$$
If we define $A_{feas,2}$ as
$$
A_{feas,2} = \{\gamma \in A_{feas} : \Vert T_{\mu} \gamma  - T_{\mu} \pi \Vert_1 \ge \rho \}.
$$
Note that by Theorem~\ref{thm:2}, we have that $A_{feas,1} \subseteq A_{feas,2}$.
We will show that for our choice of $m$, with high probability, for every $\gamma \in A_{feas,2}$, $\frac{1}{m} \sum_{j=1}^m F(\gamma,z_i) < \mathsf{OPT} - 2k \cdot \eta$.
This will prove Theorem~\ref{thm:MLE}.
~\\ To prove this, we first show that if $T_{\mu}{\pi}$ and $T_{\mu}\pi'$ are far in KL-divergence, then the log-likelihood is noticeably larger for $T_{\mu} \pi$ vis-a-vis $T_{\mu}\pi'$ .
\begin{lemma}
Let $\kappa,\delta>0$ and $m \ge \frac{n^2}{\kappa^2} \cdot \ln^2 \bigg( \frac{2}{1-\mu}\bigg) \cdot \log (1/\delta)$. Then, for $z_1, \ldots, z_m \sim T_{\mu} \pi$,
\[
\Pr_{z_1, \ldots, z_m} \bigg[ \frac{1}{m} \sum_{j=1}^m F(\pi, z_j) - \frac{1}{m} \sum_{j=1}^m F(\pi', z_j)  \ge D(T_{\mu} \pi \Vert T_{\mu} \pi') - \kappa\bigg] \ge 1-\delta.
\]
\end{lemma}
\begin{proof}
We begin by observing that for $z \sim T_{\mu} \pi$ (by definition),
$$
\mathbf{E}_z \big[ F(\pi, z) - F(\pi',z) \big] = D(T_{\mu} \pi \Vert T_{\mu} \pi').
$$
This implies that for $z_1, \ldots, z_m\sim T_{\mu} \pi$,
\[
\mathbf{E}_{z_1, \ldots, z_m} \bigg[ \frac{1}{m} \sum_{j=1}^m F(\pi, z_j) - \frac{1}{m} \sum_{j=1}^m F(\pi', z_j)\bigg] =  D(T_{\mu} \pi \Vert T_{\mu} \pi').
\]
We would now like to bound the variance of the random variable $F(\pi, z) - F(\pi',z)$ (when $z \sim T_{\mu} \pi$).  However,
$$
F(\pi, z)  - F(\pi',z) = \ln \bigg( \frac{T_{\mu} \pi (z)}{T_{\mu} \pi'(z)}\bigg).
$$
However, note that for any distribution $\pi$ supported on $\{0,1\}^n$, for all $x \in \{0,1\}^n$, $T_{\mu} \pi(x) \ge ((1-\mu)/2)^n$. This implies that for all $z$,
\[
n \cdot \ln \bigg( \frac{1-\mu}{2}\bigg) \leq F(\pi, z)  - F(\pi',z) \le n \cdot \ln \bigg( \frac{2}{1-\mu}\bigg).
\]
We now recall the Hoeffding bound for bounded random variables.
\begin{theorem}
Let $X_1, \ldots, X_n$ be $n$ independent random variables in the range $[a,b]$. Let $X = (\sum_{i=1}^n X_i)/n$ and $\mathbf{E}[X] = \Delta$.
Then,
$$
\Pr_{X_1, \ldots, X_n} \bigg[ |X - \Delta| >t \bigg] \le 3 \cdot \exp \left( -\frac{2 \cdot t^2 \cdot n}{(b-a)^2}\right).
$$
\end{theorem}
Applying the above theorem to the variable $\{F(\pi, z_i) - F(\pi',z_i) \}_{i=1}^m$, we obtain that for $m \ge \frac{n^2}{\kappa^2} \cdot \ln^2 \bigg( \frac{2}{1-\mu}\bigg) \cdot \log (1/\delta)$, we have
\[
\Pr_{z_1, \ldots, z_m} \bigg[ \frac{1}{m} \sum_{j=1}^m F(\pi, z_j) - \frac{1}{m} \sum_{j=1}^m F(\pi', z_j)  \ge D(T_{\mu} \pi \Vert T_{\mu} \pi') - \kappa \bigg] \ge 1-\delta.
\]
\end{proof}\\
Next, we prove the following simple but useful claim about elements in $A_{feas,2}$.
\begin{claim}\label{clm:aaa}
Let $\gamma, \gamma' \in A_{feas}$ such that $\Vert \gamma- \gamma' \Vert_1 \le \nu$. Then,
\[
\big\| \frac{1}{m} \sum_{j=1}^m F(\gamma, z_i) - \frac{1}{m} \sum_{j=1}^m F(\gamma', z_i) \big\| \le \frac{\nu}{\eta}.
\]
\end{claim}
\begin{proof}
We will  prove that $|F(\gamma, z_i)  - F(\gamma', z_i) |\le \frac{\nu}{\eta}.$ The conclusion then obviously follows. To prove $|F(\gamma, z_i)  - F(\gamma', z_i) |\le \frac{\nu}{\eta},$
note that $1-  \frac{\nu}{\eta}\le \frac{\gamma'(z)}{\gamma(z)} \le 1+  \frac{\nu}{\eta}$.  The proof is now immediate.
\end{proof}
\subsection{Proof of Theorem~\ref{thm:MLE}}
First of all, by standard bounds on the size of covers in $\ell_1$ ball in $k$ dimensions, we have that for any error parameter $\nu>0$, there exists $A_{cover} \subseteq A_{feas,2}$ such that
\begin{itemize}
\item $|A_{cover}| \le (k/\nu)^{O(k)}$.
\item  For every $\gamma \in A_{feas,2}$, there exists $\gamma' \in A_{cover}$, such that $\Vert \gamma- \gamma'\Vert_1 \le \nu$.
\end{itemize}
Consider any particular $\gamma \in A_{cover}$.  Next, we recall that $D(P||Q) \ge \Vert P - Q \Vert_1^2$. Now, choosing $\kappa = \rho^2/4$,  we have that if
$m \ge \frac{n^2}{\rho^4} \cdot \ln^2 \bigg( \frac{2}{1-\mu}\bigg) \cdot \log (1/\delta)$, then
$$
\Pr_{z_1, \ldots, z_m} \bigg[ \frac{1}{m} \sum_{j=1}^m F(\pi, z_j) - \frac{1}{m} \sum_{j=1}^m F(\gamma, z_j)  \ge \frac{3 \rho^2}{4} \bigg] \ge 1-\delta
$$
Let $B(\gamma, \nu) = \{\gamma': \Vert \gamma' - \gamma \Vert_1 \le \nu\}$. Then, by applying Claim~\ref{clm:aaa},
we get that
\[
\Pr_{z_1, \ldots, z_m} \sup_{\gamma' \in B(\gamma, \nu) } \bigg[ \frac{1}{m} \sum_{j=1}^m F(\pi, z_j) - \frac{1}{m} \sum_{j=1}^m F(\gamma, z_j)  \ge \frac{3 \rho^2}{4} - \frac{\nu}{\eta}\bigg]  \ge 1- \delta.
\]
Since $|A_{cover}| \le (k/\nu)^{k}$, hence if $ m \ge \frac{k \cdot n^2}{\rho^4} \cdot \ln^2 \bigg( \frac{2}{1-\mu}\bigg) \cdot \log (1/\delta) \cdot \log (k/\nu)$,  then
\[
\Pr_{z_1, \ldots, z_m} \cup_{\gamma \in A_{cover}}\sup_{\gamma' \in B(\gamma, \nu) } \bigg[ \frac{1}{m} \sum_{j=1}^m F(\pi, z_j) - \frac{1}{m} \sum_{j=1}^m F(\gamma, z_j)  \ge \frac{3 \rho^2}{4} - \frac{\nu}{\eta}\bigg]  \ge 1- \delta.
\]
Set $\eta = \rho^2 / 10 k$, $\nu = \rho^4/100 k$. This will imply that
\[
\Pr_{z_1, \ldots, z_m} \cup_{\gamma \in A_{cover}}\sup_{\gamma' \in B(\gamma, \nu) } \bigg[ \frac{1}{m} \sum_{j=1}^m F(\pi, z_j) - \frac{1}{m} \sum_{j=1}^m F(\gamma, z_j)  \ge \frac{ \rho^2}{2} \bigg]  \ge 1- \delta.
\]
On the other hand,  we had shown that there exists $\gamma_1 \in A_{feas}$ such that $ \Vert \gamma_1 - \pi \Vert_1 \le k \cdot \eta = \rho^2/10 \le \epsilon$ and
$$
\frac{1}{m} \sum_{j=1}^m F(\pi, z_j) - \frac{1}{m} \sum_{j=1}^m F(\gamma_1, z_j)  \ge \frac{ \rho^2}{10}.
$$
Thus, with probability $1-\delta$, the MLE outputs $\gamma_1$ such  that $\Vert \gamma_1 - \pi \Vert_1 \le \epsilon$.


}
\section{Robust local inverse from \cite{moitra2013polynomial}}\label{app:MS}
Recall that the matrix $A_{\mu,n} \in \mathbb{R}^{(n+1) \times (n+1)}$ is defined to be
$$
A_{\mu,n}(i,j) = \binom{i}{j} \cdot \mu^j \cdot (1-\mu)^{i-j},
$$
where $\binom{i}{j}=0$ if $j >i$. Following Moitra and Saks~\cite{moitra2013polynomial}, we now define an $\epsilon$-local inverse.
\begin{definition}
Let $w \in \mathbb{R}^{n+1}$ such that $\Vert A_{\mu,n} \cdot w - e_0 \Vert_{\infty} \le \epsilon$. Such a vector $w$ is said to be an $\epsilon$-local inverse of $A_{\mu,n}$.
\end{definition}

 Further, $\Vert w \Vert_{\infty}$ is defined to be the sensitivity of such a vector. Definition~2.1 from \cite{moitra2013polynomial} defines $\sigma_n(\mu, \epsilon)$ to be
$$
\sigma_{n} (\mu, \epsilon) = \min_{\Vert A_{\mu,n} \cdot w - e_0 \Vert_{\infty} \le \epsilon} \Vert w\Vert_{\infty}.
$$
The next observation states that the $w$ achieving the optimum in the above definition can be found using linear programming. \begin{observation}\label{obs:lp}
Using linear programming, it is  possible to find $w \in \mathbb{R}^{n+1}$ in time  $\mathsf{poly}(n)$ such that
$
\Vert A_{\mu,n} \cdot w - e_0 \Vert_{\infty} \le \epsilon,
$ such that $\Vert w \Vert_{\infty} = \sigma_{n}(\mu, \epsilon)$.
\end{observation}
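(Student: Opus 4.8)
The plan is to cast the minimization defining $\sigma_n(\mu,\epsilon)$ directly as a linear program in the variables $(w,t) \in \mathbb{R}^{n+1} \times \mathbb{R}$, where the scalar $t$ plays the role of $\Vert w \Vert_\infty$. First I would rewrite the two $\ell_\infty$ conditions coordinatewise: the bound $\Vert w \Vert_\infty \le t$ becomes the $2(n+1)$ inequalities $-t \le w_i \le t$ for $0 \le i \le n$, and the constraint $\Vert A_{\mu,n} w - e_0 \Vert_\infty \le \epsilon$ becomes $-\epsilon \le (A_{\mu,n} w)_i - (e_0)_i \le \epsilon$, another $2(n+1)$ linear inequalities in $w$. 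Minimizing the linear objective $t$ over this polyhedron is then a linear program with $n+2$ variables and $O(n)$ constraints whose optimal value is, by definition, exactly $\sigma_n(\mu,\epsilon)$, and whose optimal $w$-part is a vector witnessing that value.

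Next I would check the two points needed for this formulation to make sense and be efficiently solvable. For feasibility: since $\mu \in (0,1)$, the matrix $A_{\mu,n}$ is lower triangular with nonzero diagonal entries $\mu^i$, hence invertible, so $w^\star = A_{\mu,n}^{-1} e_0$ satisfies $A_{\mu,n} w^\star - e_0 = 0$, and $(w^\star, \Vert w^\star\Vert_\infty)$ is a feasible point. For attainment of the minimum: the set $\{w : \Vert A_{\mu,n} w - e_0 \Vert_\infty \le \epsilon\}$ is the image under $A_{\mu,n}^{-1}$ of the box $\{u : \Vert u - e_0\Vert_\infty \le \epsilon\}$, hence a bounded polyhedron, so on the feasible set $t$ is bounded (between $0$ and some finite value) and the LP attains its optimum at a vertex. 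Finally, for running time: the entries $A_{\mu,n}(i,j) = \binom{i}{j}\mu^j(1-\mu)^{i-j}$ and the number $\epsilon$ have bit length polynomial in $n$ (treating $\mu$ as a fixed constant, or more generally a rational of bit length $\mathsf{poly}(n)$), so the LP has size $\mathsf{poly}(n)$ and can be solved exactly in time $\mathsf{poly}(n)$ by any polynomial-time linear programming algorithm (ellipsoid method or interior-point method), returning a vertex solution $(w,t)$ with $t = \sigma_n(\mu,\epsilon)$ and $\Vert A_{\mu,n} w - e_0\Vert_\infty \le \epsilon$.

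There is essentially no difficult step here; the only things requiring a line of care are invoking the invertibility of $A_{\mu,n}$ (which simultaneously gives feasibility of the LP and boundedness of the feasible polyhedron, so that the infimum defining $\sigma_n(\mu,\epsilon)$ is in fact a minimum and coincides with the LP optimum), and noting that the problem data has polynomial bit complexity, which is precisely what lets a generic polynomial-time LP solver run in time $\mathsf{poly}(n)$.
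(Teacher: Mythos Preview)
Your proposal is correct. The paper itself does not supply a proof of this observation at all---it is stated as self-evident---so your explicit LP formulation (introducing $t$ for $\Vert w\Vert_\infty$, unrolling both $\ell_\infty$ conditions into $O(n)$ linear inequalities, and invoking a polynomial-time LP solver) is exactly the intended standard justification, carried out in more detail than the paper bothers with.
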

~\\ We now restate Theorem~2.2 from \cite{moitra2013polynomial} which gives an upper bound on $\sigma_{n}(\mu, \epsilon)$.
\begin{theorem*}
For all positive integers $n$ and $\mu, \epsilon>0$, $\sigma_n(\mu, \epsilon) = (1/\epsilon)^{f(\mu)}$ where $f(\mu) = (1/\mu) \cdot \log (2/\mu)$.
\end{theorem*}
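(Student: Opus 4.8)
The goal is to construct, for every $n$, a vector $w\in\mathbb{R}^{n+1}$ with $\Vert A_{\mu,n}w-e_0\Vert_\infty\le\epsilon$ and $\Vert w\Vert_\infty\le(1/\epsilon)^{(1/\mu)\log(2/\mu)}$, the striking point being that the bound is \emph{independent of $n$}; by Observation~\ref{obs:lp} one could alternatively bound the value of the defining linear program via duality, but I would pursue the explicit construction. A convenient first move is a change of variables that makes the problem one‑dimensional. Writing any candidate as $w_j=\left(-\tfrac{1-\mu}{\mu}\right)^{j}\phi(j)$ for a function $\phi\colon\{0,\dots,n\}\to\mathbb{R}$ (no loss, since $\mu\neq1$), and using $\binom ij\mu^j(1-\mu)^{i-j}\left(-\tfrac{1-\mu}{\mu}\right)^{j}=\binom ij(-1)^j(1-\mu)^i$ together with $\sum_j\binom ij(-1)^j\phi(j)=(-1)^i\Delta^i\phi(0)$ (finite difference at the origin), one gets
$$
(A_{\mu,n}w)_i \;=\; \bigl(-(1-\mu)\bigr)^{i}\,\Delta^i\phi(0).
$$
So the problem becomes: choose $\phi$ with $\phi(0)=1$ and $|\Delta^i\phi(0)|\le\epsilon\,(1-\mu)^{-i}$ for $1\le i\le n$, minimizing $\Vert w\Vert_\infty=\max_{0\le j\le n}\left(\tfrac{1-\mu}{\mu}\right)^{j}|\phi(j)|$.

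The two obvious choices fail. Taking $\phi\equiv1$ (which is exactly the unbounded inverse $A_{\mu,n}^{-1}e_0$ in disguise) kills every finite difference but gives $\Vert w\Vert_\infty=\left(\tfrac{1-\mu}{\mu}\right)^n$; a hard truncation $\phi=\mathbf{1}_{\{j\le m\}}$ is worse, because then $\Delta^i\phi(0)$ picks up alternating binomial sums of size $\sim\binom{i}{m}$ for $i\gg m$, which blow past the slack $\epsilon(1-\mu)^{-i}$. Hence $\phi$ is forced into the shape: equal to the constant $1$ on an initial range dictated by the difference constraints, then a \emph{very gentle} transition into a fast‑decaying tail, the transition being shaped so that the higher finite differences never exceed their allotted $\epsilon(1-\mu)^{-i}$, and the tail decaying geometrically at rate $\asymp\tfrac{\mu}{1-\mu}$ so that $\left(\tfrac{1-\mu}{\mu}\right)^{j}|\phi(j)|$ stays bounded.

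Thus the plan is: design a concrete "plateau $+$ smoothed transition $+$ geometric tail" window $\phi$, governed by two scale parameters (a plateau length and a transition width, each of order $\tfrac1\mu\log\tfrac1\epsilon$ up to factors), with a rational generating function $\Phi(z)=\sum_j\phi(j)z^j$; and then verify the two families of estimates, namely (a) $|\Delta^i\phi(0)|\le\epsilon(1-\mu)^{-i}$ for all $i\le n$, and (b) $|\phi(j)|\lesssim\left(\tfrac{\mu}{1-\mu}\right)^{j}(1/\epsilon)^{O((1/\mu)\log(1/\mu))}$ for all $j$, including the polynomially growing tail past the window's zeros. Estimate (b) is routine once $\Phi$ is chosen; estimate (a) I expect to be the main obstacle, and within it the case of $i$ in the transition range. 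The clean tool there is to write $\Delta^i\phi(0)$ as a contour integral of $\Phi(z)\,(1-1/z)^{i}\,z^{-1}$ around the origin and bound it by the maximum‑modulus principle after optimizing the contour radius — a saddle‑point computation in which the window's scale parameters are traded off against $\log(1/\epsilon)$ and $1/\mu$, and out of which the exponent $f(\mu)=\tfrac1\mu\log\tfrac2\mu$ falls. Given such a $\phi$, the quantity $\max_j\left(\tfrac{1-\mu}{\mu}\right)^{j}|\phi(j)|$ is dominated by the plateau and the summable tail, never by how far out $n$ reaches, which is precisely why the resulting bound on $\sigma_n(\mu,\epsilon)$ is $n$‑independent.
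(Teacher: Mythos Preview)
The paper you are working in does not prove this statement at all: it is quoted in Appendix~\ref{app:MS} as Theorem~2.2 of Moitra--Saks~\cite{moitra2013polynomial} and used as a black box to derive Theorem~\ref{thm:moitra-saks}. There is therefore no in-paper argument to compare your proposal against; the only ``proof'' here is the citation.

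On the merits of your sketch: the finite-difference reformulation is correct, and the identification $(A_{\mu,n}w)_i=(-(1-\mu))^{i}\Delta^{i}\phi(0)$ with $w_j=\bigl(-\tfrac{1-\mu}{\mu}\bigr)^{j}\phi(j)$ is a clean way to expose the structure. But what you have written is a plan, not a proof. The entire difficulty of the theorem sits inside the step you flag as ``the main obstacle'': designing the smoothed window $\phi$ and actually executing the contour/saddle-point estimate so that the exponent lands at $(1/\mu)\log(2/\mu)$ rather than $(1/\mu)^2$ or $(1/\mu)\log^{2}(1/\mu)$. Nothing in the proposal pins down the transition profile or carries out that calculation, so the claimed bound is not yet established. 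The original Moitra--Saks argument proceeds somewhat differently: it recasts the dual LP as a polynomial extremal problem (controlling the value at one point of a polynomial that is bounded on a geometric grid determined by $\mu$) and resolves that problem via complex-analytic estimates, rather than by building an explicit smoothed window and bounding its finite differences. Your generating-function route could in principle reach the same exponent, but until the window is specified and the estimates verified, it remains a heuristic.
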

Now choose $\epsilon_0=\frac{\epsilon}{1+\epsilon}$ in this theorem. Let $\alpha_0$ be the zeroth coordinate of $A_{\mu,n} \cdot w$. Note that $ 1 + \frac{\epsilon}{1+\epsilon} \ge \alpha_0 \ge 1- \frac{\epsilon}{1+\epsilon}$. Define $v=w/\alpha_0$. Then the zeroth coordinate of $A_{\mu,n} \cdot w$ is 1. For the other coordinate $i\not=0$, we have:
$$
|(A_{\mu,n} \cdot v)_i|=|(A_{\mu,n} \cdot w)_i|/\alpha_0 \leq \frac{\epsilon}{1+\epsilon}\cdot \bigg(1-\frac{\epsilon}{1+\epsilon}\bigg)^{-1}\leq \epsilon
$$
Also we have: $\Vert v \Vert_{\infty} = (1/\alpha_0) \cdot \Vert w \Vert_{\infty} \le (1-\frac{\epsilon}{1+\epsilon})^{-1}((1+\epsilon)/\epsilon)^{(2/\mu) \cdot \log (1/\mu)}\le (2/\epsilon)^{(2/\mu) \cdot \log (1/\mu)}$. This proves Theorem~\ref{thm:moitra-saks}.

\end{document}